\documentclass[journal,onecolumn,draftclsnofoot]{IEEEtran}
\usepackage{placeins}
\usepackage{bbm}
\ifCLASSINFOpdf
  \usepackage[pdftex]{graphicx}
  \usepackage{amsfonts}
  \usepackage{xcolor}

  \usepackage{bbm}
  \usepackage{graphicx}
  \usepackage{makecell}
  \usepackage{hhline}
  \usepackage{multirow}
  \graphicspath{{figures/}}

  \usepackage{booktabs} 
  \usepackage{array}
\usepackage{subcaption}

  \usepackage{amsmath,amssymb,amsthm}
  \usepackage{cite}
  \usepackage[colorlinks=true, linkcolor=blue, citecolor=blue, urlcolor=blue]{hyperref}

  \newtheorem{proposition}{Proposition}[section]

\newtheorem{corollary}{Corollary}[section]

\usepackage{xcolor}

\usepackage{titlesec}
\titlespacing{\section}
{0pt}
{1.3ex plus 0.2ex minus 0.2ex}
{0.5ex plus 0.1ex}

\titlespacing{\subsection}
{0pt}
{0.9ex plus 0.2ex minus 0.1ex}
{0.3ex plus 0.1ex}

\titleformat{\subsubsection}[runin]
{\normalfont\normalsize\itshape}
{\arabic{subsubsection})}
{0.5em}
{}
[:]

\titlespacing{\subsubsection}
{0pt}
{0.7ex plus 0.1ex minus 0.1ex}
{0.5em}

\else
\fi

\begin{document}
\bstctlcite{IEEEexample:BSTcontrol}
\title{Post-Hoc Conformal Prediction \\for Reliable Wireless Communications}
\author{Xin~Su,
        Meiyi~Zhu,~\IEEEmembership{Member,~IEEE,}
        Osvaldo~Simeone,~\IEEEmembership{Fellow,~IEEE,}
        and~Carlo~Fischione,~\IEEEmembership{Fellow,~IEEE}
\thanks{Xin Su and Carlo Fischione are with the School of Electrical Engineering and Computer Science, KTH Royal Institute of Technology, 10044 Stockholm, Sweden (e-mail: xisu@kth.se; carlofi@kth.se).
Meiyi Zhu is with the Department of Engineering, King's College London, London WC2R 2LS, U.K. (e-mail: meiyi.1.zhu@kcl.ac.uk).
Osvaldo Simeone is with the Institute for Intelligent Networked Systems, Northeastern University London, London E1 8PH, U.K. (e-mail: o.simeone@northeastern.edu).
The work of X. Su and C. Fischione was sponsored by the KTH DF research center and by the SSF SAICOM project. The work of M. Zhu and O. Simeone was supported by an Open Fellowship of the EPSRC (EP/W024101/1). The work of O. Simeone was also supported by EPSRC (EP/X011852/1) and ERC (No. 101198347).
}
}



\maketitle

\begin{abstract}
Deploying artificial intelligence (AI) in high-stakes wireless applications such as autonomous transportation requires guarantees of reliable operation. Such guarantees can often be obtained by designing systems that act on a set of predictions via conservative policies catering to all possible outcomes within the set. For instance, in location-based beam selection, a base station may identify a set of plausible locations and select a beam that ensures high capacity over the set. The miscoverage probability of the set with respect to the true outcome (e.g., the true user location) then  quantifies the outage probability, while the size of the set determines the final performance or resource budget (e.g., the transmission capacity). Conventional conformal prediction (CP) applies when the target miscoverage level is prescribed in advance. However, in practical wireless systems, prediction sets may instead be selected under prescribed operational constraints, requiring the resulting miscoverage probability to be quantified post hoc. This paper develops a formal statistical framework for the data-driven selection of prediction sets that provides a reliable estimate of the resulting miscoverage probability. The methodology builds on backward CP and probably approximately correct CP, yielding distribution-free reliability guarantees on the miscoverage probability. We apply the framework on three distinct wireless applications: narrowband interference detection, near-field localization, and codebook-based beam identification. Numerical results validate the reliability guarantees and show that the
proposed post-hoc conformal methods achieve accuracy comparable to the
na\"{\i}ve probability-based approach across all considered applications.
\end{abstract}

\begin{IEEEkeywords}
Conformal prediction, post-hoc, uncertainty quantification, operational constraints, wireless communications.
\end{IEEEkeywords}

\IEEEpeerreviewmaketitle

\section{Introduction}
\label{sec:intro}
\subsection{Motivation}

Artificial intelligence (AI) models are deeply embedded in decision processes, with their predictions informing consequential downstream operations. This is particularly the case in high-stakes wireless applications such as industrial automation, autonomous transportation, and remote health monitoring, in which a connectivity failure may have direct physical or safety consequences. For instance, in autonomous transportation, AI-based predictions of channel and network conditions can support radio-resource allocation and connectivity management in highly dynamic vehicular systems~\cite{ye2018machine}. Erroneous predictions may then delay or disrupt the delivery of safety-critical information, potentially compromising vehicle control and endangering autonomous operation~\cite{garcia2021tutorial}. In such high-stakes settings, a fundamental concern extends beyond predictive accuracy to how reliably the predictions support the resulting operations.

Achieving greater reliability, however, generally requires safeguarding against a broader range of possible outcomes, which may incur operational costs such as increased consumption of communication, computation, or hardware resources, or reduced system performance~\cite{popovski2018wireless}.
At the extreme, one could maximize reliability by accommodating every possible output, which would render the prediction uninformative while incurring prohibitive operational costs.
Practical systems, however, cannot pursue reliability without regard to such costs. Instead, resource budgets or performance requirements are often prescribed in advance, making it necessary to reliably quantify the predictive uncertainty that remains under the imposed operational constraints~\cite{sze2017efficient}.

A natural way of assessing this uncertainty level is through the confidence reported by the underlying AI model, which reflects its own assessment of predictive uncertainty.
The reliability of this self-assessment depends on its \textit{calibration}, i.e., the agreement between the confidence reported by the model and the true probability of the corresponding predictive outcome~\cite{cohen2023calibrating,guo2017calibration,simeone2026decision}.
Calibration guarantees, however, may rely on strong assumptions about the ground-truth, unknown data-generation mechanism~\cite{masegosa2020learning,zecchin2023robust}, while practical models can remain poorly calibrated, with overconfidence leading to underestimated predictive uncertainty~\cite{romano2020classification,huang2025calibrating}.
 This raises the question of how predictive uncertainty can be reliably quantified under prescribed operational constraints when the model confidence itself may be inaccurate.

\subsection{Reliable Uncertainty Quantification for Wireless Systems}

AI-based predictions, together with their associated predictive uncertainty, are increasingly leveraged to support adaptive wireless operations under dynamic and uncertain conditions.
A common way to translate such predictions into robust operation design is through a \textit{set predictor}, which specifies a set of plausible outputs, such as user locations, channel realizations, or network conditions, on which the downstream operation then acts via a conservative policy catering to all outcomes contained in the set~\cite{cohen2023calibrating,kiyani2024length,chenreddy2022data}.


\begin{figure}
    \centering
    \includegraphics[width=0.8\linewidth]{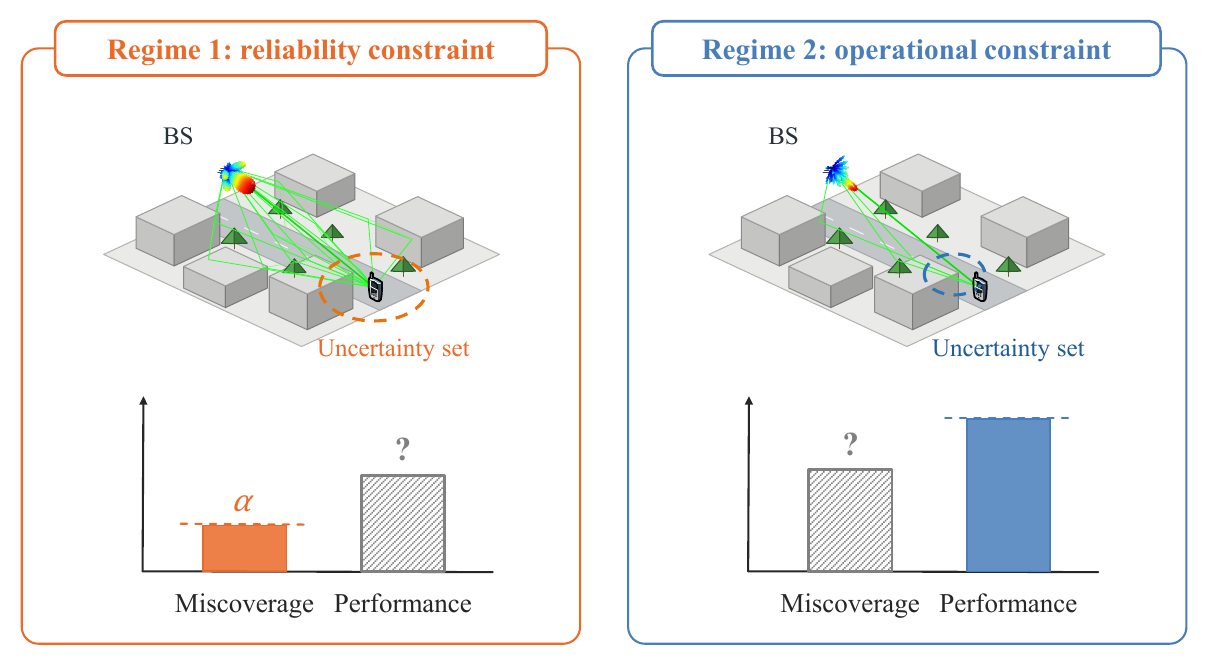}
\caption{Uncertainty-aware robust design in wireless systems can adopt two alternative criteria, illustrated here for location-based beam selection. 
The first desigrescribes a target miscoverage level $\alpha$ for the uncertainty set on the user’s location, and optimizes the beam to cater to all locations within the set. In the second regime, instead one specifies desired properties of the uncertainty set, e.g., its size and associated beamforming gain, and obtains an estimate $\hat \alpha$ of the resulting miscoverage probability. While the first regime can be formally addressed with CP~\cite{vovk2005algorithmic,su2025conformal}, the second regime requires alternative methodologies that are studied in this paper.}
    \label{fig:intro_example}
\end{figure}

As a running example, consider location-based beam selection. Rather than committing to a single estimate of the user position, the base station identifies a set of plausible locations and selects a beam that guarantees high capacity for every location within the set \cite{su2025conformal}. Two quantities then summarize the resulting operation. First, the \textit{miscoverage probability} of the set with respect to the true outcome, namely the probability that the true user location falls outside the prediction set, quantifies the outage probability of the downstream operation~\cite{kiyani2025decision}. Second, the \textit{size of the set} determines the attainable performance, or the required resource budget: a larger location set must be served by a wider beam and hence yields a lower transmission capacity, whereas a smaller set supports a higher capacity at the price of an increased risk of outage \cite{su2025conformal}.

As illustrated in Fig.~\ref{fig:intro_example}, this tension admits two complementary design regimes. In the first, the target reliability level is prescribed, and the set predictor is designed so as to limit its miscoverage probability. For example, in robust beamforming, an outage probability of at most $\alpha$ can be ensured by optimizing the beamformer over a channel uncertainty set with miscoverage probability at most $\alpha$~\cite{su2025conformal}. In the second regime, which is the focus of this work, operational constraints, such as transmit-power budgets, latency requirements, hardware limitations, or a minimum transmission capacity, instead govern the design of the set predictor by restricting the number or range of outputs included in the set~\cite{lu2020positioning}. In such settings, the prediction set is selected in a data-driven manner so as to meet the constraint, and it cannot generally guarantee any prescribed reliability level. Accurate quantification of its miscoverage probability, and hence of the outage probability of the downstream operation, thus becomes necessary for assessing the risk incurred by the system.



\subsection{Conformal Prediction and Backward Prediction}
A principled way to obtain reliable uncertainty guarantees without relying on well-calibrated predictive probabilities is conformal prediction (CP), which calibrates prediction sets to achieve prescribed coverage. For a prescribed miscoverage level $\alpha$, conventional CP uses calibration data to satisfy  \textit{finite-sample} marginal coverage guarantees of at least $1-\alpha$, irrespective of the underlying data distribution~\cite{vovk2005algorithmic,angelopoulos2021gentle}. Therefore, CP provides an ideal solution for the first design regime mentioned above, which is characterized by fixed reliability levels. Based on this property, CP has also been recently applied to wireless systems, including to symbol demodulation~\cite{cohen2023calibratingc},  beamforming~\cite{su2025conformal, zhang2025calibrating}, network control~\cite{hou2025what}, and distributed inference~\cite{zhu2024federated,zhu2025conformal}. 

However, CP is not directly applicable to the design regime of interest in this work, in which the prediction set must be selected to meet an operational requirement and the associated reliability  level is unknown a priori. Recent work has extended CP to post-hoc settings, reversing the conventional workflow by allowing the miscoverage level to be determined after calibration rather than fixed a priori while retaining formal reliability guarantees.
\textit{Backward conformal prediction} (BCP) realizes this flexibility, allowing the prediction set to satisfy a prescribed, possibly input-dependent operational constraint and inferring the corresponding miscoverage level~\cite{gauthier2025values,gauthier2025backward}. 
Recent advances have further tightened BCP through data-dependent score transformations~\cite{liu2026improving} and extended post-hoc conformal inference to fuzzy prediction sets~\cite{koning2025fuzzy}. Related post-hoc settings have also been addressed by \textit{probably approximately correct (PAC) CP}~\cite{sarkar2023post} and inverse conformal risk control (CRC)~\cite{zhou2025calibrating}. Table~\ref{tab:posthoc_comparison} summarizes the above post-hoc conformal approaches.
Despite these recent developments, to the best of our knowledge, post-hoc conformal frameworks remain largely unexplored in wireless systems.

\begin{table*}[t]
\centering
\caption{Comparison of related post-hoc conformal approaches.}
\label{tab:posthoc_comparison}
\renewcommand{\arraystretch}{1.15}
\begin{tabular}{lcccccc}
\toprule
\textbf{Method}
& \shortstack{\textbf{Prescribed}\\\textbf{size constraint}}
& \shortstack{\textbf{Input-dependent}\\\textbf{estimate}}
& \shortstack{\textbf{Calibration-dependent}\\\textbf{estimate}}
& \shortstack{\textbf{Average-ratio}\\\textbf{guarantee}}
& \shortstack{\textbf{PAC}\\\textbf{guarantee}}
& \shortstack{\textbf{Wireless}\\\textbf{applications}} \\
\midrule

BCP~\cite{gauthier2025values,gauthier2025backward}
& \checkmark
& \checkmark
&
& \checkmark
&
& \\

ST-BCP~\cite{liu2026improving}
& \checkmark
& \checkmark
&
& \checkmark
&
& \\

Fuzzy CP~\cite{koning2025fuzzy}
&
& \checkmark
&
& \checkmark
&
& \\

$\delta$-PAC CP~\cite{sarkar2023post}
&
&
& \checkmark
&
& \checkmark
& \\

Inverse CRC~\cite{zhou2025calibrating}
&
&
& \checkmark
&
& \checkmark
& \\

\midrule
\textbf{This paper}
& \checkmark
& \checkmark
& \checkmark
& \checkmark
& \checkmark
& \checkmark \\

\bottomrule
\end{tabular}
\end{table*}

\subsection{Main Contributions}

Our conference study~\cite{su2026reliable} applied BCP to reliable miscoverage estimation for budget-constrained narrowband interference detection under an input-dependent operational constraint. Building on this initial application, the present work provides the first systematic investigation of post-hoc conformal frameworks for the data-driven selection of prediction sets, and for the reliable quantification of the resulting miscoverage probability, under operational constraints in wireless systems. Our main contributions are summarized as follows.

\begin{itemize}

\item We formulate the problem of selecting a prediction set subject to application-specific operational constraints, represented by a generic set functional that captures requirements on resource cost and downstream performance, while simultaneously providing a reliable estimate of the miscoverage probability of the selected set and hence of the outage probability of the downstream operation. The formulation accommodates both input-dependent and average constraints, for which reliable miscoverage estimation is posed under the average-ratio and probably approximately correct (PAC) criteria, respectively.

\item We develop post-hoc conformal procedures for reliable miscoverage estimation by leveraging BCP~\cite{gauthier2025backward}, score-transformation backward conformal prediction (ST-BCP)~\cite{liu2026improving}, and $\delta$-PAC CP~\cite{sarkar2023post}, and establish their distribution-free reliability guarantees. BCP and ST-BCP yield input-specific estimators for input-dependent constraints under the average-ratio criterion, whereas $\delta$-PAC CP yields a calibration-dependent estimator for average constraints under the PAC criterion.

\item Finally, we instantiate the proposed framework in three wireless applications with distinct operational requirements, namely narrowband interference detection, near-field localization, and codebook-based beam identification, in which the size of the prediction set maps to a resource budget or to an attainable performance level. Numerical results validate the corresponding reliability guarantees and show that the conformal estimators achieve estimation accuracy comparable to that of the na\"{\i}ve probability-based approach.

\end{itemize}

The conference version~\cite{su2026reliable} of this work considered only BCP for narrowband interference detection. In contrast, this work develops a general framework covering both input-dependent and average operational constraints, and systematically investigates multiple post-hoc conformal methods, namely BCP, ST-BCP, and $\delta$-PAC CP. Furthermore, it extends the study to near-field localization and codebook-based beam identification, with additional numerical evaluations.

\subsection{Further Related Work}

Beyond the conformal approaches discussed above, a complementary line of work seeks to improve the reliability of the predictive probabilities reported by AI models. 
We briefly review two representative directions, namely Bayesian and ensemble learning, and post-processing calibration. 
These approaches aim to improve the calibration of model-based uncertainty estimates, but generally do not provide distribution-free finite-sample reliability guarantees.

\subsubsection{Bayesian and Ensemble Learning}
Bayesian and ensemble learning account for model uncertainty by incorporating uncertainty into the model learning and prediction process.
Specifically, Bayesian learning places a prior over the model parameters and infers a posterior from the training data, with predictions obtained by averaging over the posterior~\cite{simeone2022machine}. 
Ensemble learning instead trains multiple models and aggregates their predictions, with disagreement among ensemble members used to characterize model uncertainty~\cite{dong2020survey}. 
In wireless systems, Bayesian and ensemble approaches have been investigated for tasks including localization~\cite{tedeschini2024real}, resource allocation~\cite{zhang2026efficient}, and digital-twin maintenance~\cite{jankov2026reliability}.
Despite explicitly accounting for model uncertainty, these approaches remain dependent on the assumed model family and learning procedure, and may therefore still yield poorly calibrated predictive distributions under model misspecification~\cite{sabanovic2026calibration}.

\subsubsection{Post-Processing Calibration}

A lightweight alternative is to use separate calibration data to recalibrate the predictive distribution of a pretrained model without retraining it.
Representative approaches include Platt scaling and temperature scaling, which apply simple parametric transformations to model outputs~\cite{guo2017calibration,platt1999probabilistic}, and isotonic regression, which learns a nonparametric monotone mapping~\cite{zadrozny2002transforming}.
More flexible methods include Dirichlet calibration for multiclass predictions~\cite{kull2019beyond} and parameterized temperature scaling with prediction-dependent parameters~\cite{tomani2022parameterized}.
In wireless systems, post-processing calibration has been studied for automatic modulation classification~\cite{judah2024demonstrating} and resource allocation~\cite{raina2025trust}.
These approaches, however, do not provide distribution-free calibration guarantees and may be sensitive to limited calibration data and the choice of calibration mapping~\cite{cohen2023calibrating}.

\subsection{Organization}
The remainder of this paper is organized as follows. Sec.~\ref{sec:prob_def} introduces the background and formulates the problem of prediction-set selection and reliable miscoverage estimation under operational constraints. Sec.~\ref{sec:bcp_proc} reviews the na\"{\i}ve miscoverage estimation method and presents the post-hoc conformal procedures based on BCP, ST-BCP, and $\delta$-PAC CP. Secs.~\ref{sec:intf_det}--\ref{sec:bf} investigate their applications to narrowband interference detection, near-field localization, and codebook-based beam identification, respectively. Finally, Sec.~\ref{sec:conclusion} concludes the paper.

\section{Background and Problem Definition}
\label{sec:prob_def}
This section formulates the problem of designing prediction sets subject
to application-specific operational constraints and  reliable miscoverage
estimates.

\subsection{Background}
We consider a standard inferential setting in which a generic test pair $(x,y) \sim p(x,y)$  consists of an observed feature $x \in \mathcal{X}$, such as a baseband signal collected by the radio access network, and of the associated target value $y \in \mathcal{Y}$, e.g., the location of the transmitter or the presence/absence of interference in a given subband (see Fig.~\ref{fig:set_id}).
A pre-trained predictor is available that outputs an estimate $\hat{p}(\cdot| x)$ of the conditional distribution $p(y|x)=p(x,y)/p(x)$ of the target variable $y$.

In a conventional point-estimation approach, the predictor
applies the maximum a posteriori (MAP), returning the most likely
output
\begin{equation}
\label{eq:point_est}
    \hat y_x^{\mathrm{MAP}}
    =
    \arg\max_{y\in\mathcal Y}\hat p(y|x).
\end{equation}
Beyond point decisions, downstream operations may benefit from retaining
multiple plausible outputs.
We therefore focus on applications in which the desired output is a set $\mathcal{C}(x)$ of plausible values for variable $y$ given input $x$. 
For example, for beam identification in millimeter wave or terahertz links, the base station (BS) wishes to determine a limited number of candidate beam directions to probe via training and feedback from the user equipments (UEs)~\cite{othman2026diffusion}. Furthermore, as shown in \cite{kiyani2025decision}, any risk-averse decision maker can without loss of
optimality first evaluate a prediction set $\mathcal{C}(x)$, and then act by optimizing the worst-case outcome in set $\mathcal{C}(x)$ (see also~\cite{simeone2026decision}). 

Given a probabilistic predictor
$\hat{p}(y|x)$, in order to extract a prediction set $\mathcal{C}(x)$, we introduce a negatively-oriented score function
$s:\mathcal X\times\mathcal Y\to\mathbb R$ to quantify how well a
candidate output $y\in\mathcal Y$ matches the input $x$, where a smaller score
indicates a better match. 
For example, one can set the score as the complement of the predictive probability $s(x,y)=1-\hat{p}(y|x)$, or more generally, as any decreasing function $\phi(\cdot)$ of the predicted probability $\hat{p}(y|x)$. Examples include the inverse power (IP) score with hyperparameter $\beta>0$ ~\cite{gauthier2025values,kiyani2024length} 
\begin{equation}
\label{eq:nc_score_power}
    s(x,y)=\frac{1}{\hat{p}(y|x)^{\beta}},
\end{equation}
or the log-loss (LL) score~\cite{gauthier2025backward}
\begin{equation}
\label{eq:nc_score_log}
    s(x,y)= -\log(\hat{p}(y|x)).
\end{equation}

\begin{figure*}[t]
    \centering
\includegraphics[width=\linewidth]{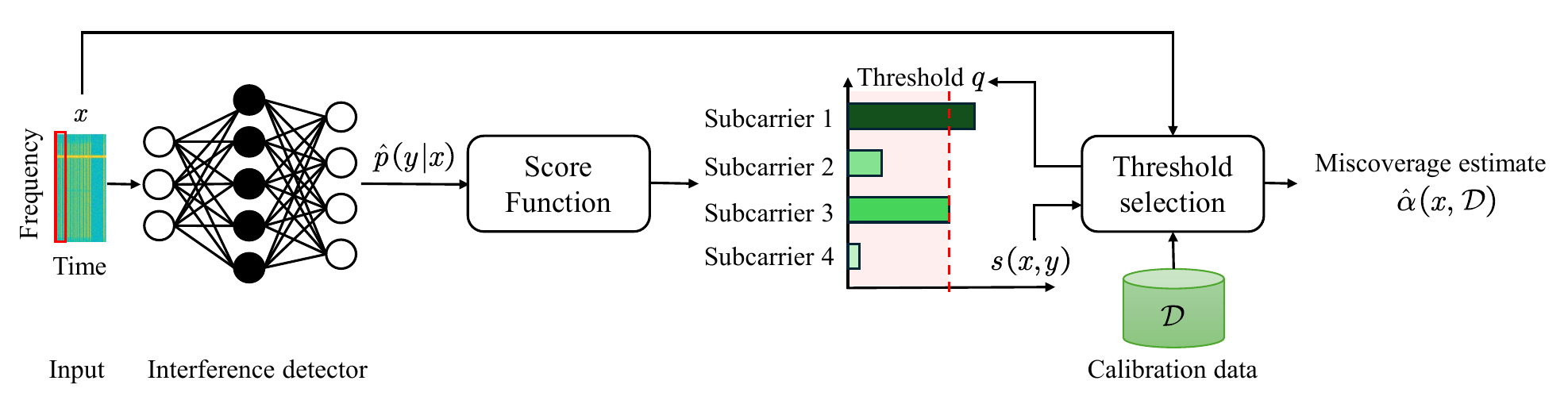}
    \caption{Prediction sets with controllable data-adaptive size and miscoverage estimates illustrated for the task of narrowband interference (NBI) detection. A pre-trained interference detector processes the input data $x$ and outputs predictive probabilities $\hat{p}(y|x)$, which are mapped to scores $s(x,y)$. Based on the calibration dataset $\mathcal{D}$ and the input $x$, a threshold $q$ is selected to construct a prediction set $\mathcal{C}_{q}(x)$ via \eqref{eq:score_threshold_set} while controlling the size and providing a reliable estimate of the miscoverage level $\hat{\alpha}(x,\mathcal{D})$ of the set $\mathcal{C}_{q}(x)$.}
    \label{fig:set_id}
\end{figure*}

Given any such score, a natural way of identifying a prediction set is to
include all candidate outputs whose scores are below a prescribed threshold
$q$, i.e.,
\begin{equation}
    \mathcal C_q(x)
    =
    \left\{
    y\in\mathcal Y:
    s(x,y)< q
    \right\}.
    \label{eq:score_threshold_set}
\end{equation}
Assume access to a calibration dataset $\mathcal{D}=\{(x_i,y_i)\}_{i=1}^{N_{\mathrm{cal}}}$, with calibration samples $(x_i,y_i)$ independently drawn from the data distribution $p(x,y)$. For a fixed target miscoverage probability $\alpha$, conventional CP \cite{vovk2005algorithmic} uses this dataset to select the threshold $q$ such that the set in \eqref{eq:score_threshold_set} contains the true label $y$ with probability no smaller than $1-\alpha$, i.e.,
\begin{equation}
    \mathrm{Pr}\left(y\in \mathcal{C}_{q}(x)\right)\ge 1-\alpha.
\label{eq:cp_marginal_cvg}
\end{equation}
The probability $\mathrm{Pr}(\cdot)$ in \eqref{eq:cp_marginal_cvg} is taken over both test pair $(x,y)\sim p(x,y)$ and calibration data $\mathcal{D}\overset{\mathrm{i.i.d.}}{\sim}p(x_i,y_i)$, which are assumed
to be mutually independent. 
A fundamental limitation of conventional CP in the context of the applications of interest here is the lack of control over the operational properties of the prediction set, such as its size, resource cost, or downstream performance. In fact, the marginal coverage condition~\eqref{eq:cp_marginal_cvg} provides no information about its resource cost or downstream utility.


To capture such operational properties, we introduce a non-negative set functional $g(\cdot):2^{\mathcal Y}\to\mathbb R_{+}$ that maps a prediction set to a resource or utility-related loss. A larger set $\mathcal C_q(x)$ entails a larger loss, as it reflects a situation of higher uncertainty. Accordingly, the function $g(\cdot)$ is non-decreasing in the cardinality of the set $\mathcal C_q(x)$. For example, in settings such as downstream beam identification (which we will discuss thoroughly later), the loss $g(\cdot)$ may be expressed as
\begin{equation}
    g\left(\mathcal C\right) = \sum_{y\in \mathcal C} w_y,
    \label{eq:weighted_cardinality}
\end{equation}
for some set of label-specific weights
$w_y\ge 0$, where $w_y$ is the cost of label $y$. The loss~\eqref{eq:weighted_cardinality} recovers the set cardinality $g\left(\mathcal C\right)=|\mathcal C|$ for weights $w_y=1$ for all $y \in \mathcal{Y}$. Alternatively, for decision makers with risk-averse objectives, $g(\cdot)$ can be expressed as the worst-case loss, captured by weight $w_y$ within the set $\mathcal C$~\cite{kiyani2025decision} 
\begin{equation}
    g\left(\mathcal C\right) = \max_{y\in\mathcal C} w_y,
    \label{eq:worst_case_utility}
\end{equation}

Given the functional $g(\cdot)$, we seek the largest prediction set $\mathcal C_q(x)$ subject to a prescribed constraint on the value of the loss $g(\mathcal C_q(x))$ by selecting the threshold $q$ in~\eqref{eq:score_threshold_set} adaptively based on the test input $x$ and the calibration data $\mathcal D$. We consider two main options:
\begin{enumerate}
    \item \emph{Input-dependent operational constraint}: We fix a priori a maximum constraint level $G_{\max}$, which is enforced for every input $x$ by selecting the threshold as
    \begin{equation}
    q(x) = \max\left\{q\in\mathbb R: g\left(\mathcal C_q(x)\right)\le G_{\max}\right\}.
    \label{eq:constraint_rule_qx}
    \end{equation}
    \item \emph{Average operational constraint}: Given an average constraint level $G_{\max}$, we select the global threshold
    \begin{equation}
    q(\mathcal D) = \max\left\{q\in\mathbb R: \frac{1}{N_{\mathrm{cal}}}\sum_{i=1}^{N_{\mathrm{cal}}} g\left(\mathcal C_q(x_i)\right)\le G_{\max}\right\},
    \label{eq:constraint_rule_qd}
    \end{equation}
    where the average of the set functional is estimated using the calibration set $\mathcal D$. 
    The constraint~\eqref{eq:constraint_rule_qd} is more relaxed than~\eqref{eq:constraint_rule_qx}, since it does not require the upper bound $G_{\max}$ to be enforced for every input but only on average.
\end{enumerate}

Given a data-dependent threshold $q(x)$ or $q(\mathcal{D})$, the resulting set 
\begin{subequations}
\label{eq:threshold_set_comb}
\begin{align}
\label{eq:threshold_set_qx}
\mathcal C(x)= &
    \left\{
    y\in\mathcal Y:
    s(x,y)< q(x)
    \right\}\\
\text{or}\qquad \mathcal C(x)= &
    \left\{
    y\in\mathcal Y:
    s(x,y)< q(\mathcal{D})
    \right\}, 
\label{eq:threshold_set_qd}
\end{align} 
\end{subequations}
cannot generally satisfy a pre-defined miscoverage probability as in \eqref{eq:cp_marginal_cvg}.
In fact, guaranteeing the condition \eqref{eq:cp_marginal_cvg} may require larger sets than allowed by the thresholds $q(x)$ and $q(\mathcal{D})$, which are selected to satisfy the operational constraint in \eqref{eq:constraint_rule_qx} and \eqref{eq:constraint_rule_qd}, respectively~\cite{koning2025fuzzy}. In order to support reliable decision-making
based on set $\mathcal{C}(x)$ in \eqref{eq:threshold_set_comb}, it is critical to quantify the reliability of set $\mathcal{C}(x)$. To this end, one should determine a non-trivial upper bound $\hat{\alpha}(x)$ as a function of test input $x$ for the input-dependent constraint~\eqref{eq:constraint_rule_qx}, or a bound $\hat{\alpha}(\mathcal{D})$ as a function of the calibration data $\mathcal{D}$ for the average constraint~\eqref{eq:constraint_rule_qd}. Specifically, these estimates should ideally provide bounds on the conditional miscoverage rate as 
\begin{subequations}
\label{eq:rel_comb}
\begin{align}
\mathrm{Pr} \left(y\notin \mathcal{C}(x)|x\right) &\le \hat{\alpha}(x),
\label{eq:rel_poitwise_per_input}\\
\text{and}\qquad \mathrm{Pr} \left(y\notin \mathcal{C}(x)|\mathcal{D}\right) &\le \hat{\alpha}(\mathcal{D}), 
\label{eq:rel_poitwise_average}
\end{align} 
\end{subequations}
respectively.
With this information, the decision-maker could decide whether set $\mathcal{C}(x)$ is sufficiently reliable to be used, e.g., by setting a threshold on the estimate $\hat{\alpha}(x)$ or $\hat{\alpha}(\mathcal{D})$.

\subsection{Problem Definition}
In this paper, our problem consists in obtaining estimates of the miscoverage probabilities $\hat{\alpha}(x)$ or $\hat{\alpha}(\mathcal{D})$ and applying them to relevant wireless communication scenarios. However, obtaining
such estimates that satisfy the pointwise guarantees
in~\eqref{eq:rel_comb} is generally impossible \cite{koning2026right}. Instead, one can consider weaker, but still meaningful, requirements, such as the average-ratio criterion studied in \cite{gauthier2025values,gauthier2025backward,koning2023post}, and the probably approximately correct (PAC) criterion investigated in \cite{sarkar2023post}.
In both cases, we allow the estimate of the miscoverage rate to depend also on the calibration data $\mathcal{D}$, and we thus adopt the notation $\hat{\alpha}(x,\mathcal{D})$ under the input-dependent constraint in~\eqref{eq:constraint_rule_qx} and retain the notation $\hat{\alpha}(\mathcal D)$ under the average constraint in~\eqref{eq:constraint_rule_qd}.

\subsubsection{Average-Ratio Criterion}
\label{subsubsec:ratio_objective}
Rather than requiring the estimate $\hat{\alpha}(x,\mathcal D)$ to upper-bound
the miscoverage pointwise as in \eqref{eq:rel_poitwise_per_input}, the average-ratio criterion requires the reported level $\hat{\alpha}(x,\mathcal D)$ not to underestimate the
miscoverage on average \cite{koning2023post}. Mathematically, this is expressed as
\begin{equation}
    \mathbb E
    \left[
    \frac{
    \mathrm{Pr}
    \left(
    y
    \notin
    \mathcal C(x)|x,\mathcal D
    \right)
    }{
    \hat{\alpha}(x,\mathcal D)
    }
    \right]
    =\mathbb E
    \left[
    \frac{
    \mathbbm{1}
    \left(
    y
    \notin
    \mathcal C(x)
    \right)
    }{
    \hat{\alpha}(x,\mathcal D)
    }
    \right]
    \le
    1,
    \label{eq:cvg_guar_e}
\end{equation}
where the expectation on the left-hand side is taken over both the calibration data $\mathcal D$
and the test input $x$, while the expectation on the right-hand side also averages over the test label $y$, and $\mathbbm{1}(\cdot)$ denotes the indicator
function. Intuitively, condition~\eqref{eq:cvg_guar_e} penalizes underestimation: a smaller estimate $\hat{\alpha}(x,\mathcal{D})$ incurs a larger penalty whenever a miscoverage event $\mathbbm{1} \left(y\notin \mathcal{C}(x)\right)$ occurs. The criterion \eqref{eq:cvg_guar_e}
was shown in \cite{koning2026right} to satisfy a number of natural axioms for post-hoc inference.

\subsubsection{PAC-Based Criterion}
\label{subsubsec:pac_objective}

Focusing on estimates $\hat{\alpha} (\mathcal{D})$ relying
solely on the calibration data $\mathcal{D}$ as in \eqref{eq:rel_poitwise_average}, the
PAC criterion fixes a failure level $\delta \in(0,1)$, and imposes the requirement that the probability of an unreliable estimate $\hat{\alpha}(\mathcal{D})$ does not exceed $\delta$.
Mathematically, this criterion is formulated as the inequality~\cite{sarkar2023post}
\begin{equation}
    \mathrm{Pr}
    \left(
    \mathrm{Pr}
    \left(
    y
    \notin
    \mathcal C(x)
    \,\middle|\,
    \mathcal D
    \right)
    \le
    \hat{\alpha}(\mathcal D)
    \right)
    \ge
    1-\delta,
    \label{eq:pac_guar}
\end{equation}
where the inner probability is taken over the randomness of a test pair $(x,y)\sim p(x,y)$ conditional on the calibration set $\mathcal D$, and the outer probability is taken over the calibration data $\mathcal D$. Intuitively, across repeated draws of the calibration set, the reported miscoverage level
$\hat{\alpha}(\mathcal D)$ upper-bounds the true miscoverage
probability in at least $100(1-\delta)\%$ of the experiments.


\section{Miscoverage Estimation for Operationally Constrained Prediction Sets}
\label{sec:bcp_proc}
In this section, we propose the procedures for constructing reliable miscoverage estimates for the operationally constrained prediction sets induced by the input-dependent rule~\eqref{eq:constraint_rule_qx} or the average rule~\eqref{eq:constraint_rule_qd}.
Specifically,
we prove that BCP~\cite{gauthier2025backward} yields the input-specific estimate
$\hat{\alpha}(x,\mathcal D)$ satisfying the average-ratio objective~\eqref{eq:cvg_guar_e},
whereas $\delta$-PAC CP~\cite{sarkar2023post} yields the estimate
$\hat{\alpha}(\mathcal D)$ satisfying the PAC condition~\eqref{eq:pac_guar}. We start by reviewing a na\"{\i}ve baseline based
directly on the predictive distribution $\hat{p}(y|x)$.

\subsection{Na\"{\i}ve Miscoverage Estimate}
\label{subsec:naive}

Given a probabilistic predictor $\hat{p}(y|x)$ and a prediction set $\mathcal{C}(x)$, a straightforward way to estimate
the miscoverage level of the set is to use the predictive probability mass assigned
outside the set.
This yields the na\"{\i}ve
miscoverage estimate (NME)
\begin{equation}
    \hat{\alpha}^{\mathrm{NME}}(x)
    =
    1-\sum_{y \in \mathcal{C}(x)} \hat{p}(y|x).
    \label{eq:naive_alpha_discrete}
\end{equation}
For the singleton set $\mathcal C(x)=\{\hat y_x^{\mathrm{MAP}}\}$ induced by the MAP estimate in~\eqref{eq:point_est}, the NME reduces to the model-reported
misclassification probability
$1-\hat p(\hat y_x^{\mathrm{MAP}}|x)$. 
Similar expressions apply for continuous output spaces by replacing the sum with an integral.
The NME in \eqref{eq:naive_alpha_discrete} is generally not reliable in the sense of
\eqref{eq:cvg_guar_e} and \eqref{eq:pac_guar}, since it
directly relies on the confidence values reported by the predictive model.
In particular, when the model is overconfident~\cite{guo2017calibration}, the NME
tends to underestimate the true miscoverage \cite{romano2020classification}. This
limitation motivates the reliable miscoverage estimation methods developed next.


\subsection{Backward Conformal Prediction}
\label{subsec:bcp}
Consider the input-dependent threshold $q(x)$ in \eqref{eq:constraint_rule_qx} and assume that the score function $s(x,y)$ is non-negative, e.g., the IP score \eqref{eq:nc_score_power} and the LL score \eqref{eq:nc_score_log}.
Given the calibration scores $\{s(x_i,y_i)\}_{i=1}^{N_{\mathrm{cal}}}$ evaluated on the calibration dataset $\mathcal{D}$, BCP evaluates the estimate~\cite{gauthier2025backward,su2026reliable}
\begin{equation}
    \hat{\alpha}^{\mathrm{BCP}}(x,\mathcal D)
    =
    \frac{
        \sum_{i=1}^{N_{\mathrm{cal}}} s(x_i,y_i)
        +
        q(x)
    }{
        (N_{\mathrm{cal}}+1)q (x)
    }.
    \label{eq:alpha_hat_closed_form}
\end{equation}
Intuitively, when the prediction set in~\eqref{eq:score_threshold_set} is more confident, in the sense that the excluded labels have larger scores, the threshold $q(x)$ selected by \eqref{eq:constraint_rule_qx} increases, thus leading to a smaller miscoverage estimate $\hat{\alpha}^{\mathrm{BCP}}(x,\mathcal{D})$ in~\eqref{eq:alpha_hat_closed_form}.

\begin{proposition}[BCP reliability guarantee]
\label{prop:bcp_closed_form_validity}
For the prediction set $\mathcal C(x)$ in \eqref{eq:threshold_set_qx} selected by
\eqref{eq:constraint_rule_qx}, the BCP miscoverage estimate \eqref{eq:alpha_hat_closed_form} 
satisfies the reliability condition in \eqref{eq:cvg_guar_e}, i.e.,
\begin{equation}
    \mathbb{E}
    \left[
    \frac{
        \mathbbm{1}
        \left(
        y
        \notin
        \mathcal C(x)
        \right)
    }{
        \hat{\alpha}^{\mathrm{BCP}}(x,\mathcal D)
    }
    \right]
    \le
    1.
    \label{eq:post_hoc}
\end{equation}
\end{proposition}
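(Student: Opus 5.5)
Proposition~\ref{prop:bcp_closed_form_validity} follows from an e-value argument. Write $S_i=s(x_i,y_i)$ for $i=1,\dots,N_{\mathrm{cal}}$ and $S_{N_{\mathrm{cal}}+1}=s(x,y)$ for the test score; all are non-negative. Define the conformal e-variable
\begin{equation*}
E=\frac{(N_{\mathrm{cal}}+1)\,S_{N_{\mathrm{cal}}+1}}{\sum_{j=1}^{N_{\mathrm{cal}}+1} S_j},
\end{equation*}
with the convention $0/0=0$, or $E=1$ if all scores vanish. The proof has two steps: first show that $\mathbb{E}[E]\le 1$, then show pointwise that $\mathbbm{1}(y\notin\mathcal C(x))/\hat{\alpha}^{\mathrm{BCP}}(x,\mathcal D)\le E$. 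Taking expectations then gives \eqref{eq:post_hoc}.

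\textbf{Step 1 (exchangeability).} The calibration pairs and the test pair are i.i.d., so $(S_1,\dots,S_{N_{\mathrm{cal}}+1})$ is exchangeable. Hence $\mathbb{E}\big[S_k/\sum_j S_j\big]$ does not depend on $k$. Summing over $k$ gives $\sum_k S_k/\sum_j S_j=1$ whenever the denominator is positive, so $\mathbb{E}\big[S_{N_{\mathrm{cal}}+1}/\sum_j S_j\big]\le 1/(N_{\mathrm{cal}}+1)$, and therefore $\mathbb{E}[E]\le 1$. Non-negativity of the score is used here: it makes each ratio well-defined and bounded in $[0,1]$.

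\textbf{Step 2 (pointwise domination).} On the event $y\in\mathcal C(x)$ the left-hand side is $0\le E$, so there is nothing to prove. On the event $y\notin\mathcal C(x)$, the definition \eqref{eq:threshold_set_qx} gives $s(x,y)\ge q(x)$. We may assume $q(x)>0$ so that \eqref{eq:alpha_hat_closed_form} is finite and positive. Then
\begin{equation*}
\frac{1}{\hat{\alpha}^{\mathrm{BCP}}(x,\mathcal D)}=\frac{(N_{\mathrm{cal}}+1)\,q(x)}{\sum_{i=1}^{N_{\mathrm{cal}}}S_i+q(x)}.
\end{equation*}
The map $t\mapsto (N_{\mathrm{cal}}+1)t/(A+t)$ with $A=\sum_{i\le N_{\mathrm{cal}}}S_i\ge 0$ is non-decreasing in $t\ge 0$. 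Since $q(x)\le S_{N_{\mathrm{cal}}+1}$, this yields $1/\hat{\alpha}^{\mathrm{BCP}}\le E$.

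Combining the two steps,
\begin{equation*}
\mathbb{E}\left[\frac{\mathbbm{1}(y\notin\mathcal C(x))}{\hat{\alpha}^{\mathrm{BCP}}(x,\mathcal D)}\right]\le \mathbb{E}[E]\le 1.
\end{equation*}
This holds even though $q(x)$ is data-dependent, because the bound in Step 2 is pointwise and applies to any threshold, including the one chosen by \eqref{eq:constraint_rule_qx}.

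\textbf{Main obstacle.} The delicate point is the boundary cases. If $q(x)=0$, the estimate \eqref{eq:alpha_hat_closed_form} is infinite; since the score is non-negative, $\mathcal C(x)$ is then empty, and the natural convention is $1/\hat\alpha=0$ or a cap of $\hat\alpha$ at $\infty$. If all scores are zero, the ratio defining $E$ is $0/0$ and must be handled by the convention above. A further subtlety is that the maximum in \eqref{eq:constraint_rule_qx} must be attained, or replaced by a supremum. This does not matter for the argument, since Step 2 only uses the strict inequality in \eqref{eq:threshold_set_qx} to obtain $s(x,y)\ge q(x)$ on the miscoverage event.
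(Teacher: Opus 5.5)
Your proof is correct and follows essentially the same route as the paper's. The paper defines the same conformal e-variable and notes that miscoverage corresponds to $E(x,y)\ge 1/\hat{\alpha}^{\mathrm{BCP}}(x,\mathcal D)$ because $E$ is monotone in the test score; it then applies the post-hoc validity property $\mathbb{E}[\mathbbm{1}(E\ge 1/\tilde{\alpha})/\tilde{\alpha}]\le 1$ of e-variables with $\tilde{\alpha}=\hat{\alpha}^{\mathrm{BCP}}(x,\mathcal D)$. You differ only in proving both ingredients directly rather than citing them: $\mathbb{E}[E]\le 1$ by exchangeability, and the pointwise domination $\mathbbm{1}(y\notin\mathcal C(x))/\hat{\alpha}^{\mathrm{BCP}}(x,\mathcal D)\le E$ behind post-hoc validity. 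You also use only the one-directional implication, which still holds in degenerate cases such as $\sum_i s(x_i,y_i)=0$, where $E$ is not strictly monotone.
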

\begin{IEEEproof}
See Appendix~\ref{app:bcp_proof}.    
\end{IEEEproof}

\subsection{Score-Transformation Backward Conformal Prediction} \label{subsec:st_bcp}
The BCP estimate $\hat{\alpha}^{\mathrm{BCP}}(x,\mathcal D)$ in~\eqref{eq:alpha_hat_closed_form} depends on the calibration scores through their magnitudes, and is thus generally sensitive to the choice of score function and can be overly conservative when the predictive model is less confident.
To mitigate the potential over-conservativeness of this solution, score-transformation BCP (ST-BCP)~\cite{liu2026improving} adopts a modified score evaluated through the monotonically non-decreasing transformation
\begin{equation}
    \tilde{s}(x,y) = q(x)\mathbbm{1}\{s(x,y)\geq q(x)\}.
    \label{eq:st_score}
\end{equation}
The transformation~\eqref{eq:st_score} collapses each score to one of two values, mapping the scores of the labels included in set~\eqref{eq:threshold_set_qx}
to zero and the remaining scores to $q(x)$. Accordingly, the rule \eqref{eq:constraint_rule_qx} yields the same threshold $q(x)$ as BCP, and hence  it preserves the BCP prediction set $\mathcal C(x)$ in \eqref{eq:threshold_set_qx}.

Replacing the calibration scores $\{s(x_i,y_i)\}_{i=1}^{N_{\mathrm{cal}}}$ in \eqref{eq:alpha_hat_closed_form} with the transformed scores $\tilde{s}(x_i,y_i)$ yields the ST-BCP estimate~\cite{liu2026improving}
\begin{equation}
    \hat{\alpha}^{\mathrm{ST\text{-}BCP}}(x,\mathcal D) = \frac{\sum_{i=1}^{N_{\mathrm{cal}}} q(x_i)\mathbbm{1}\{s(x_i,y_i)\geq q(x_i)\}+q(x)}{(N_{\mathrm{cal}}+1)q(x)}.
    \label{eq:st_alpha}
\end{equation}

\begin{corollary}[ST-BCP reliability guarantee]
\label{cor:st_reliability}
For the prediction set $\mathcal C(x)$ in \eqref{eq:threshold_set_qx} selected by \eqref{eq:constraint_rule_qx}, the ST-BCP miscoverage estimate \eqref{eq:st_alpha} satisfies the reliability condition in \eqref{eq:cvg_guar_e}, i.e.,
\begin{equation}
    \mathbb E\left[\frac{\mathbbm{1}\left(y\notin\mathcal C(x)\right)}{\hat{\alpha}^{\mathrm{ST\text{-}BCP}}(x,\mathcal D)}\right]\leq 1.
    \label{eq:st_reliability}
\end{equation}
Furthermore, the ST-BCP estimate~\eqref{eq:st_alpha} is no larger than the BCP estimate~\eqref{eq:alpha_hat_closed_form}, i.e.,
\begin{equation}
    \hat{\alpha}^{\mathrm{ST\text{-}BCP}}(x,\mathcal D) \leq \hat{\alpha}^{\mathrm{BCP}}(x,\mathcal D).
    \label{eq:st_bcp_ordering}
\end{equation}
\end{corollary}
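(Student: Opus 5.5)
The plan is to reduce both claims to the exchangeability argument behind Proposition~\ref{prop:bcp_closed_form_validity}, now run with the transformed score $\tilde s(x,y)=q(x)\mathbbm{1}\{s(x,y)\ge q(x)\}$ in place of $s(x,y)$. Throughout I assume $q(x)>0$, as needed for \eqref{eq:alpha_hat_closed_form} and \eqref{eq:st_alpha} to be well defined.

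The first step is to check that $\tilde s$ is a legitimate score. The threshold $q(x)$ in \eqref{eq:constraint_rule_qx} depends only on $x$, the fixed functional $g$, and the fixed level $G_{\max}$, and not on $\mathcal D$. Hence $\tilde s(x,y)$ is a fixed measurable function of the pair $(x,y)$. It is also nonnegative. Since the calibration pairs and the test pair are i.i.d., the $N_{\mathrm{cal}}+1$ values $\tilde s_1,\dots,\tilde s_{N_{\mathrm{cal}}},\tilde s_{N_{\mathrm{cal}}+1}$ are exchangeable, where $\tilde s_i=\tilde s(x_i,y_i)$ and $\tilde s_{N_{\mathrm{cal}}+1}=\tilde s(x,y)$.

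Rather than invoking Proposition~\ref{prop:bcp_closed_form_validity} as a black box, I will use its core e-value fact directly. This avoids having to argue that re-applying rule \eqref{eq:constraint_rule_qx} to $\tilde s$ returns exactly $q(x)$. The fact is that, for exchangeable nonnegative $\tilde s_j$ with $\tilde S=\sum_{j=1}^{N_{\mathrm{cal}}+1}\tilde s_j$,
\[
\mathbb E\left[\frac{(N_{\mathrm{cal}}+1)\,\tilde s_{N_{\mathrm{cal}}+1}}{\tilde S}\right]\le 1,
\]
with the convention $0/0=0$. This follows by symmetry: the $N_{\mathrm{cal}}+1$ ratios $\tilde s_j/\tilde S$ are identically distributed and sum to at most $1$, so each has mean at most $1/(N_{\mathrm{cal}}+1)$.

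Next I relate this quantity to the ratio in \eqref{eq:st_reliability}, splitting into two cases.
\begin{itemize}
\item If $y\in\mathcal C(x)$, the ratio is $0$, and the bound is trivial.
\item If $y\notin\mathcal C(x)$, then $s(x,y)\ge q(x)$, so $\tilde s_{N_{\mathrm{cal}}+1}=q(x)$. The denominator of \eqref{eq:st_alpha} is then exactly $\tilde S$, and
\[
\frac{\mathbbm{1}(y\notin\mathcal C(x))}{\hat\alpha^{\mathrm{ST\text{-}BCP}}(x,\mathcal D)}=\frac{(N_{\mathrm{cal}}+1)\,q(x)}{\tilde S}=\frac{(N_{\mathrm{cal}}+1)\,\tilde s_{N_{\mathrm{cal}}+1}}{\tilde S}.
\]
\end{itemize}
In both cases the ratio is at most $(N_{\mathrm{cal}}+1)\tilde s_{N_{\mathrm{cal}}+1}/\tilde S$. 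Taking expectations and using the e-value fact gives \eqref{eq:st_reliability}.

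The ordering \eqref{eq:st_bcp_ordering} is a termwise comparison, since both estimates share the denominator $(N_{\mathrm{cal}}+1)q(x)$ and the additive term $q(x)$. For each $i$:
\begin{itemize}
\item if $s(x_i,y_i)\ge q(x_i)$, then $\tilde s_i=q(x_i)\le s(x_i,y_i)$;
\item otherwise $\tilde s_i=0\le s(x_i,y_i)$, by nonnegativity of the score.
\end{itemize}
Hence $\sum_i\tilde s_i\le\sum_i s(x_i,y_i)$, and dividing by the common positive denominator yields the claim.

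The main point of care is the first step. One must confirm that $q(\cdot)$ is computed without reference to $\mathcal D$, so that $\tilde s$ is a fixed function and exchangeability is preserved. One must also confirm that the miscoverage event coincides exactly with $\tilde s(x,y)=q(x)$; both $\mathcal C(x)$ in \eqref{eq:threshold_set_qx} and $\tilde s$ in \eqref{eq:st_score} use the same strict/non-strict convention, so they align. The degenerate case $\tilde S=0$ is handled by the $0/0=0$ convention, since the numerator then also vanishes.
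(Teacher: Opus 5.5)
Your proof is correct, and your argument for the ordering \eqref{eq:st_bcp_ordering} is the same as the paper's. For \eqref{eq:st_reliability}, however, you take a different route.

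The paper argues by reduction. It observes that rerunning rule \eqref{eq:constraint_rule_qx} on $\tilde s$ returns the same threshold $q(x)$ and the same set $\mathcal C(x)$. Hence \eqref{eq:st_alpha} is literally the BCP estimate for the score $\tilde s$, and Proposition~\ref{prop:bcp_closed_form_validity} applies verbatim.

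You instead open up the e-variable argument of Appendix~\ref{app:bcp_proof} and run it directly on the transformed scores. You use the fact that $\tilde s$ takes only the values $0$ and $q(x)$. As a result, $\mathbbm{1}(y\notin\mathcal C(x))/\hat{\alpha}^{\mathrm{ST\text{-}BCP}}(x,\mathcal D)$ coincides exactly with $(N_{\mathrm{cal}}+1)\tilde s_{N_{\mathrm{cal}}+1}/\tilde S$. In the BCP proof the corresponding ratio is only dominated by the e-variable.

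Your route is more self-contained. It does not need the paper's one-line claim that the threshold rule is invariant under the transformation. Nor does it need the implicit infimum characterization \eqref{eq:alpha_hat_implicit} or the post-hoc validity property \eqref{eq:app_post_hoc_e}. It uses only that $q(\cdot)$ is computed without $\mathcal D$, so exchangeability survives the transformation.

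The paper's route is shorter and modular: it inherits whatever assumptions and refinements come with Proposition~\ref{prop:bcp_closed_form_validity}. The underlying mechanism is the same in both cases, namely that $\mathbb E[E]\le 1$ by exchangeability.
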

\begin{IEEEproof}
Since the transformation preserves both the threshold $q(x)$ and the prediction set $\mathcal C(x)$, applying Proposition~\ref{prop:bcp_closed_form_validity} with the score function $s(\cdot,\cdot)$ replaced by $\tilde{s}(\cdot,\cdot)$ directly yields \eqref{eq:st_reliability}. Moreover, we have the inequality $\tilde{s}(x_i,y_i)\leq s(x_i,y_i)$ for every $i=1,\ldots,N_{\mathrm{cal}}$ by \eqref{eq:st_score}. Comparing \eqref{eq:st_alpha} with \eqref{eq:alpha_hat_closed_form} therefore yields \eqref{eq:st_bcp_ordering}.
\end{IEEEproof}

The ordering \eqref{eq:st_bcp_ordering} shows that ST-BCP generally yields smaller miscoverage estimates than BCP. However, a smaller estimate is not necessarily more accurate, and the comparison between BCP and ST-BCP must be addressed on a case-by-case basis.




\subsection{$\delta$-PAC Conformal Prediction}
\label{subsec:pac_cp}

We now consider the threshold $q(\mathcal D)$ selected according to the average operational constraint
\eqref{eq:constraint_rule_qd}, with the aim of satisfying the PAC criterion
\eqref{eq:pac_guar}. 
Let
$F(q)=\mathrm{Pr}(s(x,y)< q)$ denote the  CDF of the score. For any
deterministic threshold $q$, the coverage probability of the prediction set
in \eqref{eq:score_threshold_set} equals the CDF
\begin{equation}
    \mathrm{Pr}
    \left(
    y
    \in
    \mathcal C_q(x)
    \right)
    =
    F(q).
    \label{eq:coverage_score_cdf}
\end{equation}

Based on the calibration data $\mathcal D$, $\delta$-PAC CP \cite{sarkar2023post}
constructs a lower bound
$\ell_{\delta}(q;\mathcal D)$
for the CDF $F(q)$ that holds simultaneously over all thresholds $q\in\mathbb R$ with
probability at least $1-\delta$, i.e.,
\begin{equation}
    \mathrm{Pr}
    \left(
    \ell_{\delta}(q;\mathcal D)
    \le
    F(q),
    \forall q\in\mathbb R
    \right)
    \ge
    1-\delta,
    \label{eq:cdf_band}
\end{equation}
where the probability is taken over calibration data $\mathcal{D}$. For example, one can use the Dvoretzky--Kiefer--Wolfowitz (DKW)
inequality~\cite{massart1990tight} to construct the fixed-width lower confidence
bound in \eqref{eq:cdf_band} as
\begin{equation}
\ell_\delta^{\mathrm{DKW}}(q;\mathcal D)
    =
    \max\left\{
    \hat F(q;\mathcal D)
    -
    \sqrt{\frac{\log(2/\delta)}{2N_{\mathrm{cal}}}},
    0
    \right\},
    \label{eq:dkw_lower_band}    
\end{equation}
where 
\begin{equation}
    \hat F(q;\mathcal D)
    =
    \frac{1}{N_{\mathrm{cal}}}
    \sum_{i=1}^{N_{\mathrm{cal}}}
    \mathbbm{1}
    \left(
    s(x_i,y_i)< q
    \right)
    \label{eq:empirical_cdf}
\end{equation}
denotes the empirical CDF (ECDF) of the calibration scores.

In this paper, we leverage a sharper variable-width lower confidence bound in
\eqref{eq:cdf_band} using the method of D\"umbgen and Wellner (DW)
\cite{dumbgen2023new}, as detailed in Appendix~\ref{app:dw_band}. 

The estimate 
$\hat{\alpha}^{\mathrm{PAC}}(\mathcal D)$ is then obtained as 
\begin{equation}
    \hat{\alpha}^{\mathrm{PAC}}(\mathcal D)
    =
    1
    -
    \ell_{\delta}
    \left(
     q(\mathcal D);
    \mathcal D
    \right).
    \label{eq:pac_alpha_hat}
\end{equation}
Intuitively, the lower confidence band $\ell_{\delta}(q(\mathcal D);\mathcal D)$ assesses the coverage of the selected prediction set by examining how frequently the calibration scores are likely to fall below the threshold $q(\mathcal D)$. 

\begin{proposition}[PAC reliability guarantee]
\label{prop:pac_miscoverage_estimate}
For the prediction set $\mathcal C(x)$ in \eqref{eq:threshold_set_qd} selected by
\eqref{eq:constraint_rule_qd}, the PAC miscoverage estimate \eqref{eq:pac_alpha_hat} satisfies the PAC reliability
condition in \eqref{eq:pac_guar} for any fixed failure level $\delta\in(0,1)$, i.e.,
\begin{equation}
    \mathrm{Pr}
    \left(
    \mathrm{Pr}
    \left(
    y
    \notin
    \mathcal C(x)
    \middle|
    \mathcal D
    \right)
    \le
    \hat{\alpha}^{\mathrm{PAC}}(\mathcal D)
    \right)
    \ge
    1-\delta.
    \label{eq:pac_reliability}
\end{equation}
\end{proposition}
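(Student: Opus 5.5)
The plan is to reduce the statement to the uniform-in-$q$ band property \eqref{eq:cdf_band}. The threshold $q(\mathcal D)$ in \eqref{eq:constraint_rule_qd} depends on the calibration data, so a pointwise bound at a fixed $q$ would not suffice. The simultaneous validity over all $q\in\mathbb R$ is exactly what absorbs this data-dependence.

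\emph{Step 1 (conditional miscoverage as a CDF value).} Condition on $\mathcal D$. Then $q(\mathcal D)$ is a fixed number, and the test pair $(x,y)$ is independent of $\mathcal D$ and distributed as $p(x,y)$. By \eqref{eq:threshold_set_qd} and \eqref{eq:coverage_score_cdf} we therefore have
\begin{equation*}
\mathrm{Pr}\left(y\notin\mathcal C(x)\,\middle|\,\mathcal D\right)=1-\mathrm{Pr}\left(s(x,y)<q(\mathcal D)\,\middle|\,\mathcal D\right)=1-F\big(q(\mathcal D)\big).
\end{equation*}
This uses that $F$ is the population CDF of the test score, with the same strict inequality as in \eqref{eq:score_threshold_set}. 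It holds because the test score has the same distribution as a calibration score.

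\emph{Step 2 (good event).} Let $\mathcal E=\{\ell_\delta(q;\mathcal D)\le F(q),\ \forall q\in\mathbb R\}$. By \eqref{eq:cdf_band}, $\mathrm{Pr}(\mathcal E)\ge 1-\delta$, and this holds for the DKW band \eqref{eq:dkw_lower_band} as well as for the DW band of Appendix~\ref{app:dw_band}. On $\mathcal E$, we may evaluate the inequality at the particular random point $q=q(\mathcal D)$, because it holds for every $q$ simultaneously. This gives $\ell_\delta(q(\mathcal D);\mathcal D)\le F(q(\mathcal D))$. Combining with Step 1 and \eqref{eq:pac_alpha_hat}, on $\mathcal E$ we obtain $\mathrm{Pr}(y\notin\mathcal C(x)|\mathcal D)=1-F(q(\mathcal D))\le 1-\ell_\delta(q(\mathcal D);\mathcal D)=\hat\alpha^{\mathrm{PAC}}(\mathcal D)$. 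The event in \eqref{eq:pac_reliability} thus contains $\mathcal E$, and \eqref{eq:pac_reliability} follows.

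\emph{Main obstacle.} The main obstacle is technical bookkeeping in Step 2: the band must be valid uniformly over all thresholds, and the left-continuous convention $F(q)=\mathrm{Pr}(s<q)$ must match the ECDF \eqref{eq:empirical_cdf}. I would check that DKW and DW, usually stated for $\mathrm{Pr}(s\le q)$, transfer to the strict version. The transfer works by taking left limits: the uniform bound at all $q'<q$ passes to the limit $q'\uparrow q$, because the empirical and true CDFs both have left limits of the strict form. A secondary measurability point is that $q(\mathcal D)$ is well defined. The maximum in \eqref{eq:constraint_rule_qd} is attained, or is replaced by a supremum, since $g(\mathcal C_q)$ is piecewise constant in $q$. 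In any case the argument only requires that $q(\mathcal D)$ be some $\mathcal D$-measurable value, which is immediate.
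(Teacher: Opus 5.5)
Your proposal is correct and follows the paper's proof. Both arguments rewrite the miscoverage as $1-F(q)$ via \eqref{eq:coverage_score_cdf}, then use the simultaneity of the band \eqref{eq:cdf_band} over all $q\in\mathbb R$ to evaluate it at the data-dependent threshold $q(\mathcal D)$; your explicit conditioning on $\mathcal D$ (via independence of the test pair) and your remarks on the strict-inequality CDF convention and on measurability are bookkeeping the paper leaves implicit.
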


\begin{IEEEproof}
Using \eqref{eq:coverage_score_cdf}, for any deterministic threshold $q$, the
miscoverage probability is given by
$\mathrm{Pr}\left(y\notin\mathcal C_q(x)\right)=1-F(q)$. Therefore, the
confidence-band guarantee in \eqref{eq:cdf_band} implies
\begin{equation}
    \mathrm{Pr}
    \left(
    \mathrm{Pr}
    \left(
    y
    \notin
    \mathcal C_q(x)
    \right)
    \le
    1-\ell_{\delta}(q;\mathcal D),
    \quad
    \forall q\in\mathbb R
    \right)
    \ge
    1-\delta.
    \label{eq:simultaneous_miscov_bound}
\end{equation}
Since \eqref{eq:simultaneous_miscov_bound} holds uniformly over
$q\in\mathbb R$, evaluating it at the calibration-dependent threshold
$q=q(\mathcal D)$ yields the estimate
$\hat{\alpha}^{\mathrm{PAC}}(\mathcal D)$ in \eqref{eq:pac_alpha_hat}, which satisfies the PAC reliability condition in \eqref{eq:pac_reliability}.
\end{IEEEproof}

A related approach~\cite{zhou2025calibrating} can also satisfy the PAC criterion~\eqref{eq:pac_guar},
but its miscoverage estimate is generally more conservative than
\eqref{eq:pac_alpha_hat} with the DKW bound~\eqref{eq:dkw_lower_band}; we therefore defer details
to Appendix~C.

\begin{figure*}[!t]
    \centering

    \begin{minipage}[t]{\textwidth}
        \centering

        \begin{subfigure}[t]{0.48\linewidth}
            \centering
            \includegraphics[width=\linewidth]
            {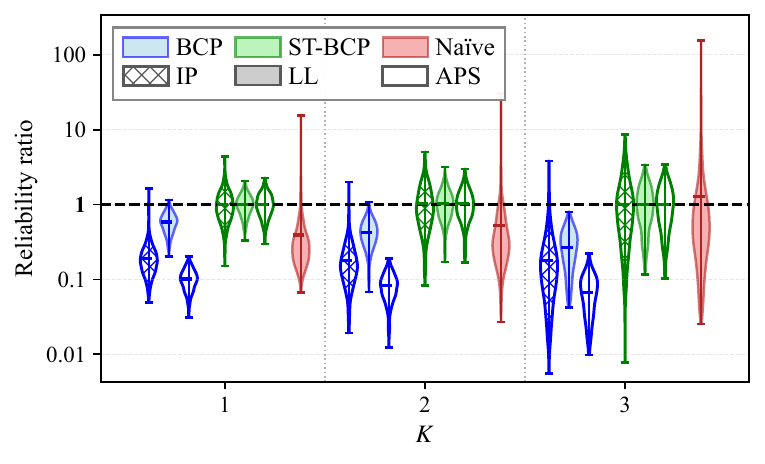}
            \caption{$\mathrm{SIR}=0$ dB.}
            \label{fig:intf_ratio_sir0_equal}
        \end{subfigure}
        \hfill
        \begin{subfigure}[t]{0.48\linewidth}
            \centering
            \includegraphics[width=\linewidth]
            {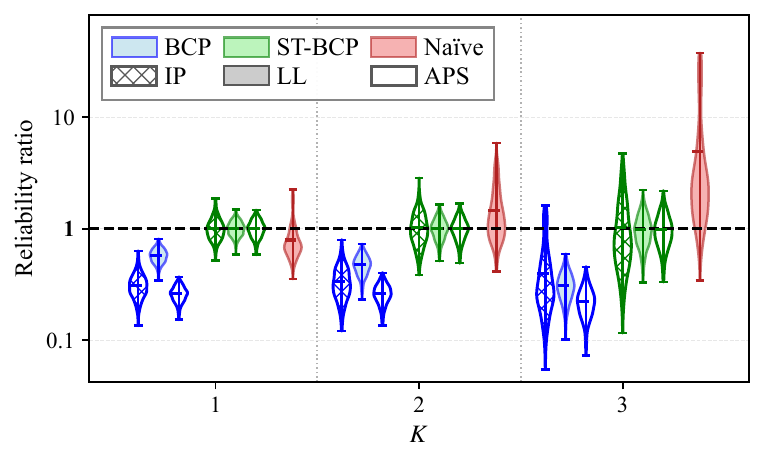}
            \caption{$\mathrm{SIR}=10$ dB.}
            \label{fig:intf_ratio_sir10_equal}
        \end{subfigure}

        \addtocounter{figure}{-1}
        \captionof{figure}{
        Violin plot of the reliability ratio~\eqref{eq:metric_rr}
        versus the budget $K$ for different NC scores under the
        equal-cost setting. Each violin shows the distribution over
        the $N_{\mathrm{run}}$ experiments, with the dashed line
        indicating the corresponding average.}
        \label{fig:intf_reliability_ratio_equal}

    \end{minipage}
    \par
    \begin{minipage}[t]{\textwidth}
        \centering

        \begin{subfigure}[t]{0.48\linewidth}
            \centering
            \includegraphics[width=\linewidth]
            {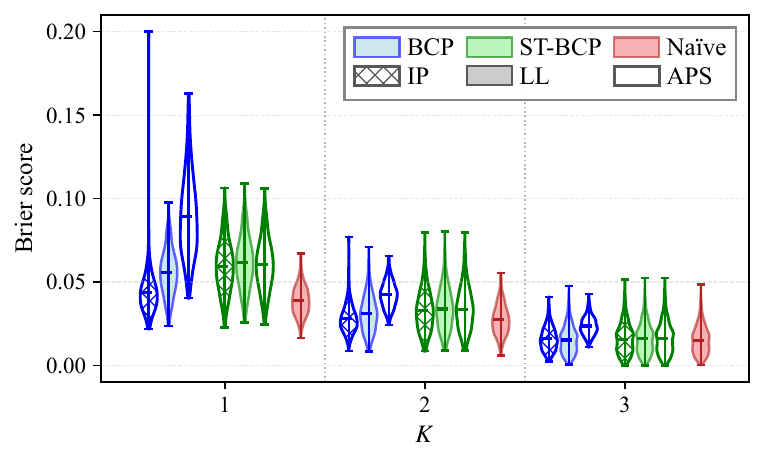}
            \caption{$\mathrm{SIR}=0$ dB.}
            \label{fig:intf_brier_sir0_equal}
        \end{subfigure}
        \hfill
        \begin{subfigure}[t]{0.48\linewidth}
            \centering
            \includegraphics[width=\linewidth]
            {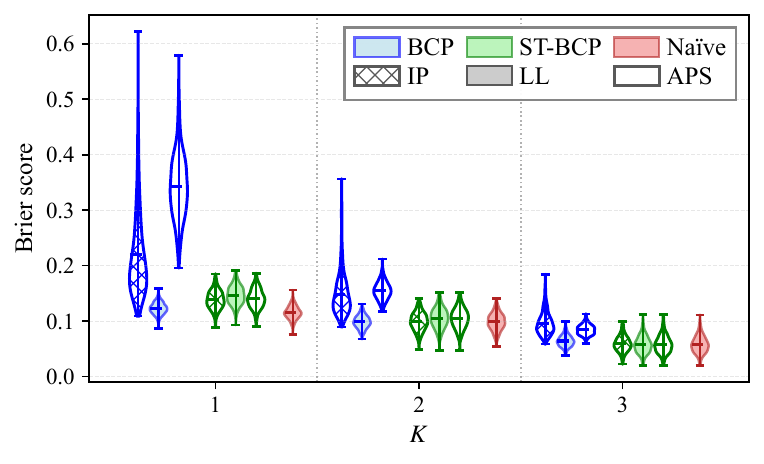}
            \caption{$\mathrm{SIR}=10$ dB.}
            \label{fig:intf_brier_sir10_equal}
        \end{subfigure}
        \captionof{figure}{
        Violin plot of the Brier score~\eqref{eq:metric_bs}
        versus the budget $K$ for different NC scores under the
        equal-cost setting.}
        \label{fig:intf_brier_score_equal}

    \end{minipage}

\end{figure*}

\section{Narrowband Interference Detection}
\label{sec:intf_det}
In this section, we present the first application of the miscoverage
estimation methods of Sec.~\ref{sec:bcp_proc}, namely budget-aware narrowband
interference detection~\cite{robinson2023narrowband}. This task
adopts the input-dependent operational constraint~\eqref{eq:constraint_rule_qx} and is
addressed using BCP (Sec.~\ref{subsec:bcp}) and
ST-BCP (Sec.~\ref{subsec:st_bcp}).

\begin{figure*}[!t]
    \centering

    \begin{minipage}[t]{\textwidth}
        \centering
        \begin{subfigure}[t]{0.49\linewidth}
            \centering
            \includegraphics[width=\linewidth]{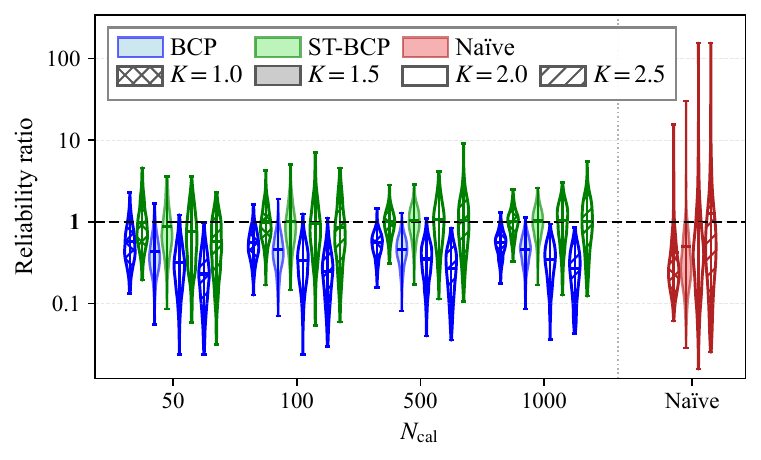}
            \caption{$\mathrm{SIR}=0$ dB.}
            \label{fig:intf_ratio_sir0_unequal}
        \end{subfigure}
        \hfill
        \begin{subfigure}[t]{0.49\linewidth}
            \centering
            \includegraphics[width=\linewidth]{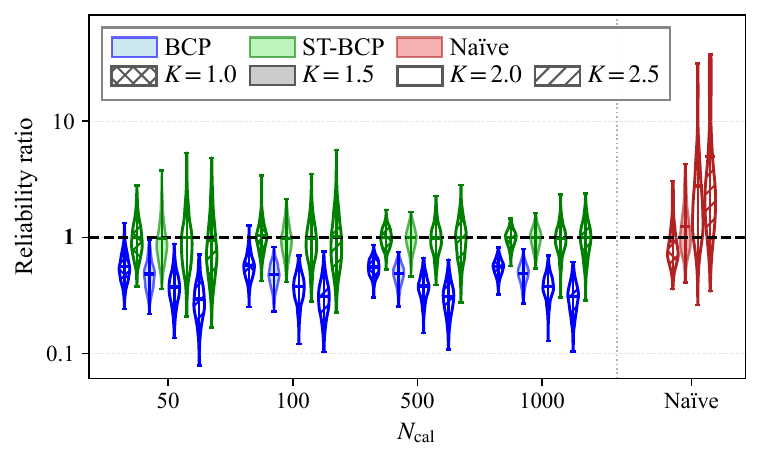}
            \caption{$\mathrm{SIR}=10$ dB.}
            \label{fig:intf_ratio_sir10_unequal}
        \end{subfigure}

        \addtocounter{figure}{-1}
        \captionof{figure}{Violin plot of the reliability ratio~\eqref{eq:metric_rr} versus the calibration size $N_{\mathrm{cal}}$ for different budgets $K$ under the unequal-cost setting.}
        \label{fig:intf_reliability_ratio_unequal}
    \end{minipage}
    \par
    \begin{minipage}[t]{\textwidth}
        \centering
        \begin{subfigure}[t]{0.49\linewidth}
            \centering
            \includegraphics[width=\linewidth]{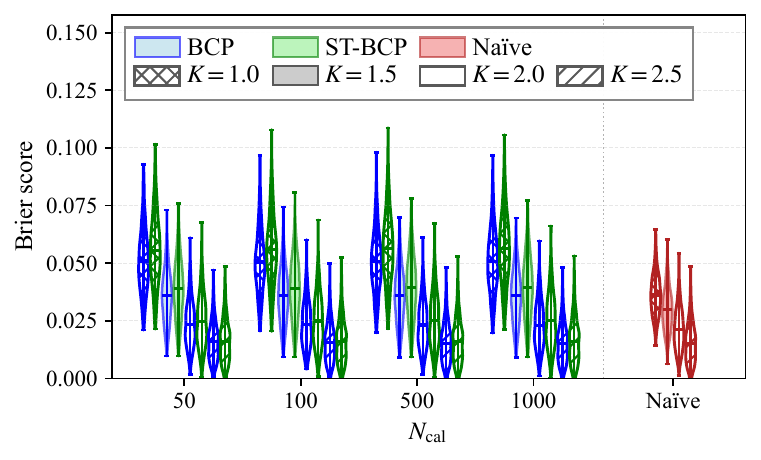}
            \caption{$\mathrm{SIR}=0$ dB.}
            \label{fig:intf_brier_sir0_unequal}
        \end{subfigure}
        \hfill
        \begin{subfigure}[t]{0.49\linewidth}
            \centering
            \includegraphics[width=\linewidth]{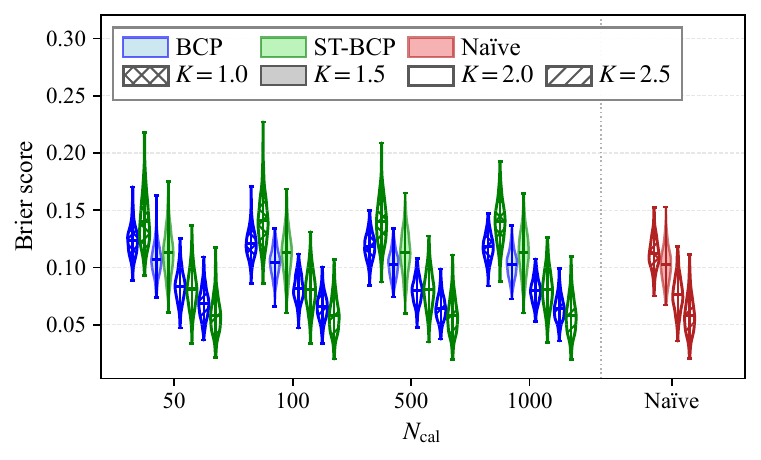}
            \caption{$\mathrm{SIR}=10$ dB.}
            \label{fig:intf_brier_sir10_unequal}
        \end{subfigure}

        \captionof{figure}{Violin plot of the Brier score~\eqref{eq:metric_bs} versus the calibration size $N_{\mathrm{cal}}$ for different budgets $K$ under the unequal-cost setting.}
        \label{fig:intf_brier_score_unequal}
    \end{minipage}

\end{figure*}

\subsection{Problem Formulation}
We consider a wideband system operating in the presence of NBI~\cite{robinson2023narrowband}.
Following~\cite{robinson2023narrowband}, the receiver collects baseband I/Q samples $x$ and determine whether NBI is present and, if so, identify the affected subcarriers. We focus on NBI-affected observations and monitor $S$ subcarriers, assuming, as in~\cite{robinson2023narrowband}, that the NBI affects only one of them.
The output $y$ identifies the affected subcarrier, with label space $\mathcal{Y}=\{1,\ldots,S\}$.


In this application, we consider the input-dependent constraint in~\eqref{eq:constraint_rule_qx}, since the budget $K\in\mathbb{R}$ available for downstream mitigation may be limited by the receiver hardware or per-input performance requirements~\cite{same2021multiple,bedeer2012adaptive}.
To model this constraint, we use the weighted set-size functional
in~\eqref{eq:weighted_cardinality}, where $w_y\in(0,1]$ denotes the
normalized cost of mitigating subcarrier $y$. For example, subcarriers carrying more bits can be assigned larger mitigation cost $w_y$, since including them in the set incurs a larger throughput loss~\cite{bedeer2012adaptive}.
Our goal is then to
construct, for the resulting set $\mathcal C(x)$
in~\eqref{eq:threshold_set_qx}, a miscoverage estimate that satisfies
the reliability condition in~\eqref{eq:cvg_guar_e}.


\subsection{BCP-Based Solution}
\label{subsec:bcp_intf}

We use the probabilistic classifier in
\cite{robinson2023narrowband} as the interference detector to obtain the predictive distribution
$\hat{p}(y|x)$ over the $S$ candidate subcarriers. To construct the prediction set, we adopt the LL score in
\eqref{eq:nc_score_log}. 
Let $y^{(s)}$ denote the $s$-th most likely output under the predictive distribution $\hat{p}(y|x)$, with ties broken arbitrarily.
Applying the threshold-selection rule in
\eqref{eq:constraint_rule_qx} with $G_{\max}=K$, we obtain
\begin{equation}
    q(x)
    =
    -\log \hat{p}(
    y^{(r_K(x)+1)}
    |
    x),
    \label{eq:q_intf}
\end{equation}
where $r_K(x)
    =
    \max\{
    r\in\{1,\ldots,S\}:
    \sum_{s=1}^{r}w_{y^{(s)}}\le K
    \}$.
Thus, $q(x)$ in \eqref{eq:q_intf} corresponds to the score of the most likely label excluded from
the prediction set $\mathcal{C}(x)$. Given the calibration data $\mathcal{D}$, evaluating
\eqref{eq:alpha_hat_closed_form} and~\eqref{eq:st_alpha} using the
threshold in~\eqref{eq:q_intf} yields the miscoverage estimates
$\hat{\alpha}^{\mathrm{BCP}}(x,\mathcal D)$ and
$\hat{\alpha}^{\mathrm{ST\text{-}BCP}}(x,\mathcal D)$, respectively,
both satisfying the average-ratio reliability guarantee
in~\eqref{eq:cvg_guar_e}.

\subsection{Numerical Results}

\subsubsection{Simulation Setup}
We investigate a WiFi system with narrowband interference, following the setup in~\cite{robinson2023narrowband}. In particular, the bandwidth of interest is partitioned into $S=4$ monitored subcarriers.
All WiFi I/Q data are generated according to the IEEE 802.11a standard using the MATLAB WLAN Toolbox. The resulting training dataset contains approximately $120{,}000$ examples per label, with the signal-to-interference ratio (SIR) ranging from $-10$ to $10$ dB. Using this training dataset, we implement the detector following the CNN architecture in~\cite{robinson2023narrowband}, except that stochastic gradient descent is adopted as the optimizer. For evaluation, we use two held-out datasets, each containing $3{,}000$
examples per label, collected at $\mathrm{SIR}=0$ dB and
$\mathrm{SIR}=10$ dB, respectively.
We compare the BCP and ST-BCP miscoverage estimates in Sec.~\ref{subsec:bcp_intf} with the NME in
Sec.~\ref{subsec:naive}.

\subsubsection{Performance Metrics}
For all considered schemes, performance is evaluated on the test dataset $\mathcal{D}^{\mathrm{te}}=\{(x_j,y_j)\}_{j=1}^{N_{\mathrm{te}}}$, with
$N_{\mathrm{te}}=200$, in terms of both reliability and accuracy. Let $\hat{\alpha}_j$ and $m_j=\mathbbm{1}(y_j\notin\mathcal{C}(x_j))$ denote, respectively, the miscoverage estimate and the corresponding true miscoverage event for the $j$-th test sample.

Based on these quantities, we report the
\begin{equation}
\text{Reliability ratio}
=
\frac{1}{N_{\mathrm{te}}}\sum_{j=1}^{N_{\mathrm{te}}}\frac{m_j}{\hat{\alpha}_j}
\label{eq:metric_rr}
\end{equation}
to assess reliability. The estimates $\hat{\alpha}_j$ are reliable in the sense of \eqref{eq:cvg_guar_e} if the
reliability ratio in \eqref{eq:metric_rr}, averaged over independent
experiments, is no greater than one. Furthermore, to quantify accuracy, we report the 
\begin{equation}
\label{eq:metric_bs}
    \text{Brier~score}
    =
    \frac{1}{N_{\mathrm{te}}}
    \sum_{j=1}^{N_{\mathrm{te}}}
    \left(m_j-\hat{\alpha}_j\right)^2,
\end{equation}
which measures the mean squared error between the per-input
estimated and true miscoverage.
We average these metrics over $N_{\mathrm{run}}=500$ independent experiments. In each experiment, disjoint calibration and test sets $\{\mathcal{D}^{\mathrm{cal}},\mathcal{D}^{\mathrm{te}}\}$ are randomly drawn from the held-out dataset.
Unless otherwise stated, all violin plots show the distribution over
the $N_{\mathrm{run}}$ experiments, with dashed lines indicating the
corresponding averages.

\subsubsection{Performance Analysis}
We first consider the \textit{equal-cost setting}, with $w_y=1$ for all $y\in\mathcal{Y}$ and $N_{\mathrm{cal}}=1000$. 
For BCP and ST-BCP, we adopt the IP score in~\eqref{eq:nc_score_power} with $\beta=0.75$, the LL score in~\eqref{eq:nc_score_log}, and the randomized adaptive prediction sets (APS)-based score~\cite{romano2020classification}. The parameter $\beta$ is selected from a discrete set of candidate values using a separate validation set.
In particular, using the same ordering $\{y^{(s)}\}_{s=1}^{S}$ as in Sec.~\ref{subsec:bcp_intf}, the randomized APS-based score is defined as
\begin{equation}
s(x,y^{(s)})
=
-\log\left(
1-
\sum_{j=1}^{s-1}\hat p(y^{(j)}|x)
-
U\hat p(y^{(s)}|x)
\right),
\label{eq:aps_score}
\end{equation}
where $U\sim\operatorname{Unif}[0,1]$ is drawn independently for each sample.

Fig.~\ref{fig:intf_reliability_ratio_equal} compares the reliability ratio~\eqref{eq:metric_rr} across different NC scores and budgets $K$. Under both SIR conditions, the average reliability ratios of BCP and ST-BCP remain no greater than one across all considered scores and budgets, which is consistent with the reliability guarantee in~\eqref{eq:cvg_guar_e}. In contrast, the reliability ratio of the NME frequently exceeds one, particularly at higher SIR. This is because weaker interference makes the detection task more challenging, while the 
NME remains overconfident.

The estimation accuracy is further assessed in Fig.~\ref{fig:intf_brier_score_equal} using the Brier score~\eqref{eq:metric_bs}. For BCP, the Brier score is noticeably affected by the choice of NC score, with the IP score generally yielding lower values and the randomized APS-based score higher values, particularly for smaller $K$. In contrast, ST-BCP is much less sensitive to the NC score, with largely overlapping Brier-score distributions across the three choices. With an appropriate NC score, both BCP and ST-BCP achieve Brier scores comparable to those of the NME, and the gap generally decreases as $K$ increases.

We next consider the more general \textit{unequal-cost setting} with $(w_1,w_2,w_3,w_4)=(1,1,0.5,0.5)$, and use the LL score in~\eqref{eq:nc_score_log} to investigate the impact of the calibration size $N_{\mathrm{cal}}$.
Fig.~\ref{fig:intf_reliability_ratio_unequal} shows that the reliability guarantee is preserved under unequal costs, with the average reliability ratios of BCP and ST-BCP remaining no greater than one across all considered calibration sizes and budgets. In contrast, the NME can violate the reliability condition, particularly at higher SIR and larger budgets.

Fig.~\ref{fig:intf_brier_score_unequal} further evaluates the accuracy under unequal costs. As $N_{\mathrm{cal}}$ increases, the Brier-score distributions of both BCP and ST-BCP become more concentrated, as the empirical averages of the calibration scores in~\eqref{eq:alpha_hat_closed_form} and~\eqref{eq:st_alpha} become more stable. This effect is more pronounced at $\mathrm{SIR}=10$ dB, where the estimates exhibit larger variability for small calibration sets. Overall, both methods achieve Brier scores close to those of the NME.

\section{Near-Field Localization}
\label{sec:loc}
In this section, we investigate
utility-aware near-field localization~\cite{mozaffarikhosravi2025localization}
under the input-dependent operational constraint in
\eqref{eq:constraint_rule_qx}. BCP is applied to obtain reliable miscoverage
estimates for prediction sets satisfying a prescribed beamforming-gain
requirement.

\subsection{Problem Formulation}
We consider near-field localization, in which an $N$-antenna BS estimates the location $y$ (i.e., the distance from the BS) of a single-antenna user based on the received pilot signal $x$~\cite{mozaffarikhosravi2025localization}. The user is located in the radiative near-field region, with distance $y\in[R_{\mathrm{FD}},R_{\mathrm{RD}}]$, where $R_{\mathrm{FD}}$ and $R_{\mathrm{RD}}$ denote the Fresnel and Rayleigh distances, respectively, as defined in~\cite{selvan2017fraunhofer}. Following~\cite{mozaffarikhosravi2025localization}, a point estimate of the user location $\hat y_x$ can be obtained, for a known user angle, by applying the multiple signal classification (MUSIC) algorithm to the input $x$.

In practice, the estimate  $\hat y_x$ need not be exact for
downstream beam focusing, provided that focusing the beam at location $\hat y_x$
delivers sufficient gain to a user at the true location
$y$~\cite{gavriilidis2025nearfield}. This tolerance is captured by the relative beamforming gain $u_x(y)$, defined as the array gain obtained when focusing at location $\hat y_x$ for a user at location $y$, normalized by the gain under perfect focusing~\cite{gavriilidis2025nearfield}, so that $u_x(\hat y_x)=1$. 
To maintain each user’s link quality under localization errors~\cite{sheemar2025joint}, we impose the input-dependent constraint in~\eqref{eq:constraint_rule_qx}, requiring the relative beamforming gain to exceed a prescribed level $\lambda\in(0,1)$~\cite{gavriilidis2025nearfield}.
To model this
constraint, we use the worst-case loss functional
in~\eqref{eq:worst_case_utility} with $w_y=1-u_x(y)$ and set
$G_{\max}=1-\lambda$ in~\eqref{eq:constraint_rule_qx}. 
Based on this formulation, our goal is to construct a reliable per-input miscoverage estimate $\hat{\alpha}(x,\mathcal D)$ for the set $\mathcal C(x)$~\eqref{eq:threshold_set_qx} under the constraint~\eqref{eq:constraint_rule_qx}, that satisfies the average-ratio criterion in~\eqref{eq:cvg_guar_e}.

\begin{figure*}[!t]
    \centering

    \begin{minipage}[t]{\textwidth}
        \centering
        \begin{subfigure}[t]{0.49\linewidth}
            \centering
            \includegraphics[width=\linewidth]{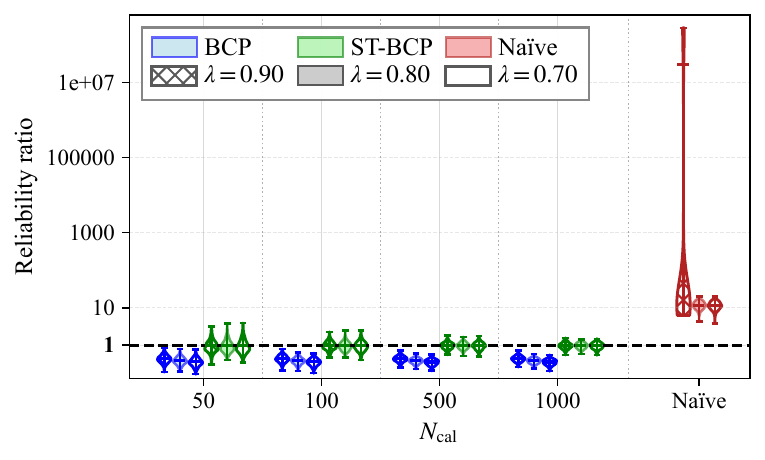}
            \caption{$\mathrm{SNR}=10$ dB.}
            \label{fig:ratio_loc_10db}
        \end{subfigure}
        \hfill
        \begin{subfigure}[t]{0.49\linewidth}
            \centering
            \includegraphics[width=\linewidth]{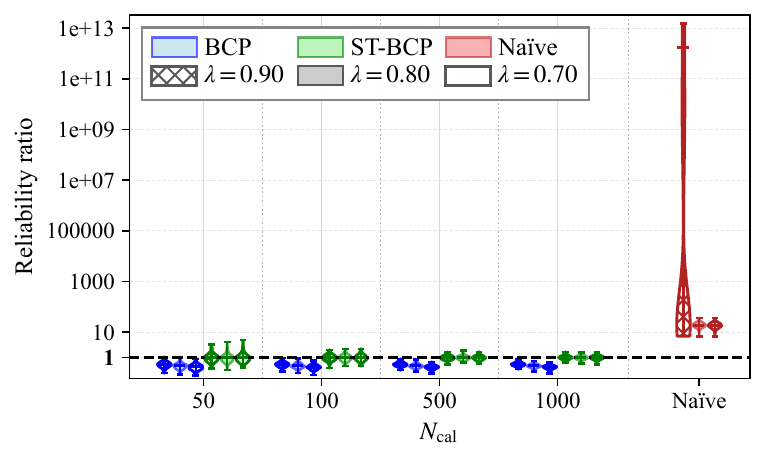}
            \caption{$\mathrm{SNR}=25$ dB.}
            \label{fig:ratio_loc_25db}
        \end{subfigure}

        \addtocounter{figure}{-1}
        \captionof{figure}{Violin plots of the reliability ratio~\eqref{eq:metric_rr}
        versus the calibration size $N_{\mathrm{cal}}$ for gain levels
        $\lambda\in\{0.9,0.8,0.7\}$.}
        \label{fig:ratio_loc}
    \end{minipage}
    \par
    \begin{minipage}[t]{\textwidth}
        \centering
        \begin{subfigure}[t]{0.49\linewidth}
            \centering
            \includegraphics[width=\linewidth]{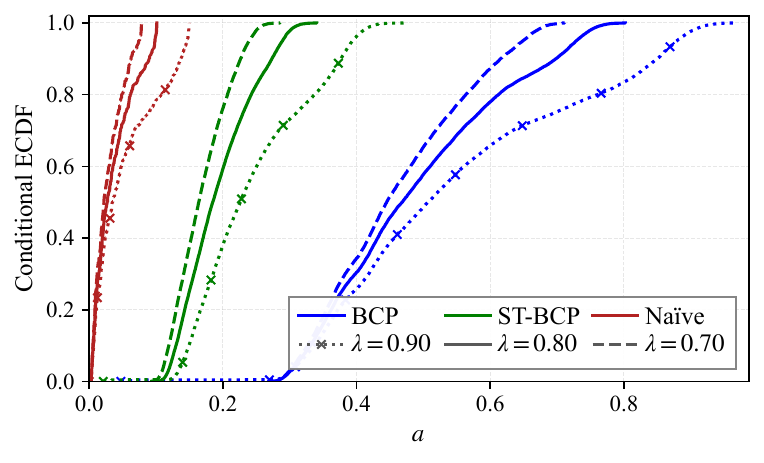}
            \caption{$\mathrm{SNR}=10$ dB.}
            \label{fig:ecdf_loc_10db}
        \end{subfigure}
        \hfill
        \begin{subfigure}[t]{0.49\linewidth}
            \centering
            \includegraphics[width=\linewidth]{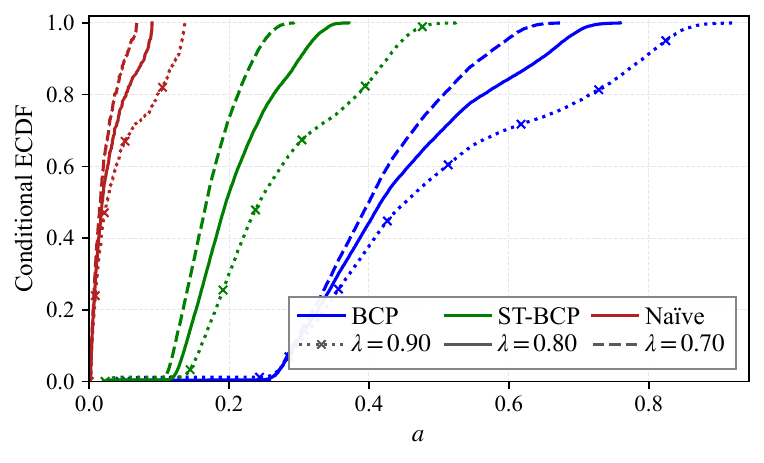}
            \caption{$\mathrm{SNR}=25$ dB.}
            \label{fig:ecdf_loc_25db}
        \end{subfigure}
        \captionof{figure}{Conditional ECDF~\eqref{eq:metric_cecdf} of the estimated miscoverage level $\hat{\alpha}$ given miscoverage for gain levels $\lambda\in\{0.9,0.8,0.7\}$. For BCP and ST-BCP, the calibration size is fixed to $N_{\mathrm{cal}}=1000$.}
        \label{fig:ecdf_loc}
    \end{minipage}

\end{figure*}

\subsection{BCP-Based Solution}
\label{subsec:bcp_loc}
To instantiate the prediction set $\mathcal C(x)$ in~\eqref{eq:threshold_set_qx},
we first construct the probability predictor as follows. Following~\cite{mozaffarikhosravi2025localization}, the conditional distribution of the distance estimate is approximated as $p(\hat y_x|y)=\mathcal G(1/(\eta y^2),\eta y^3)$, where $\mathcal G(\cdot,\cdot)$ denotes the Gamma distribution parameterized by its shape and scale, and $\eta>0$ is fitted using the training data, e.g., via least squares. Given a prior density $p(y)$, Bayes' theorem yields the posterior probability density function (PDF)
\begin{equation}
    p(y|\hat y_x)
    =
    \frac{p(\hat y_x|y)p(y)}
    {\int_{R_{\mathrm{FD}}}^{R_{\mathrm{RD}}} p(\hat y_x|z)p(z)\,dz},
    \quad
    y\in[R_{\mathrm{FD}},R_{\mathrm{RD}}],
    \label{eq:posterior_loc_general}
\end{equation}
and $p(y|\hat y_x)=0$ otherwise.

It remains to specify the score $s(x,y)$ introduced in~\eqref{eq:score_threshold_set}. 
Let
$\underline{u}_x(y)\triangleq\min_{z\in[\hat y_x,y]}u_x(z)$ denote the
minimum relative beamforming gain between distances $\hat y_x$ and $y$, where
$[a,b]$ denotes the interval with endpoints $a$ and $b$. Using the
posterior PDF in~\eqref{eq:posterior_loc_general}, we define the NC score
from the posterior mass of the corresponding gain-superlevel interval as
\begin{equation}
s(x,y)
=
-\log\left(
1-
\int_{\{z\in\mathcal Y:\,
\underline{u}_x(z)>\underline{u}_x(y)\}}
p(z\mid\hat y_x)\,dz
\right).
\label{eq:score_loc}
\end{equation}
Hence, a candidate distance $y$ closer to $\hat y_x$ is assigned a smaller score.
Using the score in~\eqref{eq:score_loc}, the score threshold
$q(x)$ in~\eqref{eq:constraint_rule_qx} corresponding to the target gain
level $\lambda$ is
\begin{equation}
q(x)
=
-\log\left(
1-
\int_{\{z\in\mathcal Y:\,\underline{u}_x(z)>\lambda\}}
p(z\mid\hat y_x)\,dz
\right).
\label{eq:threshold_loc}
\end{equation}
With the threshold in~\eqref{eq:threshold_loc}, the set $\mathcal C(x)$
in~\eqref{eq:threshold_set_qx} is the connected component of the
$\lambda$-superlevel set of $u_x$ that contains $\hat y_x$. 
Substituting the calibration scores
$\{s(x_i,y_i)\}_{i=1}^{N_{\mathrm{cal}}}$ computed on the calibration
data $\mathcal D$ and the threshold $q(x)$ in~\eqref{eq:threshold_loc}
into \eqref{eq:alpha_hat_closed_form} and~\eqref{eq:st_alpha} yields the miscoverage estimates
$\hat{\alpha}^{\mathrm{BCP}}(x,\mathcal D)$ and $\hat{\alpha}^{\mathrm{ST-BCP}}(x,\mathcal D)$, respectively, both satisfying the
reliability guarantee in~\eqref{eq:cvg_guar_e}.

\subsubsection{Simulation Setup}

We adopt the same near-field localization setup and parameters as
in~\cite{mozaffarikhosravi2025localization}.
At $\mathrm{SNR}\in\{10,25\}$ dB, we generate a dataset of $5000$ pairs $(\hat y_x,y)$ for each SNR value, with uniformly distributed user locations in the radiative near-field region. The dataset is equally split into a training subset for fitting the Gamma-model parameter $\eta$ and a held-out subset for calibration and testing.
For each target relative beamforming-gain level
$\lambda\in\{0.9,0.8,0.7\}$, we compare BCP and ST-BCP
in Sec.~\ref{subsec:bcp_loc} with the NME in
Sec.~\ref{subsec:naive}.
Performance is evaluated using the same metrics as in
Sec.~\ref{sec:intf_det}, namely, the reliability ratio
in~\eqref{eq:metric_rr} and the Brier score in~\eqref{eq:metric_bs}.
Results are averaged over $N_{\mathrm{run}}=500$ independent experiments. In each experiment, disjoint calibration and test sets $\{\mathcal{D}^{\mathrm{cal}},\mathcal{D}^{\mathrm{te}}\}$ are randomly drawn from the held-out dataset, with $N_{\mathrm{te}}=200$ test samples.

\subsubsection{Performance Analysis}
Fig.~\ref{fig:ratio_loc} compares the reliability ratios~\eqref{eq:metric_rr} of BCP,
ST-BCP, and the NME at $\mathrm{SNR}=10$ and $25$ dB. Across both SNR conditions and all considered gain levels and calibration
sizes, the average reliability ratios of both BCP and ST-BCP remain no
greater than one, empirically validating the reliability guarantee
in~\eqref{eq:cvg_guar_e}. Furthermore, the distributions of
both methods become increasingly concentrated as the calibration set $N_{\mathrm{cal}}$
grows. In contrast, the NME produces reliability ratios substantially above
one for all considered gain levels, with more severe violations at $\mathrm{SNR}=25$ dB.
The violation is particularly pronounced at $\lambda=0.9$, where the
distribution exhibits an extremely heavy upper tail because the narrower
intervals yield a higher miscoverage rate that is severely underestimated
by the NME.

To further investigate the extreme reliability ratios of the NME, let $\hat{\alpha}_{r,j}$ and
$m_{r,j}=\mathbbm{1}(y_{r,j}\notin\mathcal{C}(x_{r,j}))$ denote,
respectively, the miscoverage estimate and the corresponding true
miscoverage indicator for the $j$-th test sample in the $r$-th experiment. Fig.~\ref{fig:ecdf_loc} reports the
\begin{equation}
\text{Conditional ECDF}(a)
=
\frac{
\displaystyle\sum_{r=1}^{N_{\mathrm{run}}}
\sum_{j=1}^{N_{\mathrm{te}}}
m_{r,j}\mathbbm{1}\left(\hat{\alpha}_{r,j}\leq a\right)
}{
\displaystyle\sum_{r=1}^{N_{\mathrm{run}}}
\sum_{j=1}^{N_{\mathrm{te}}}m_{r,j}
},
\label{eq:metric_cecdf}
\end{equation}
which denotes the ECDF of the estimated miscoverage levels over all
miscovered test samples.
Across both SNR conditions and all considered values of $\lambda$, the conditional ECDFs of the
NME rise sharply near zero, indicating that a substantial fraction of
the observed miscoverage events are assigned very small miscoverage
estimates. At $\lambda=0.9$, the higher miscoverage rate, combined with this
mismatch, results in the extremely large reliability ratios and the
heavy upper tails observed in Fig.~\ref{fig:ratio_loc}.
By contrast, the absence of near-zero estimates under BCP and ST-BCP
prevents individual miscoverage events from dominating the reliability
ratio, as reflected by their average ratios remaining below one in
Fig.~\ref{fig:ratio_loc}.

\begin{figure*}[!t]
    \centering

    \begin{minipage}[t]{\textwidth}
        \centering
        \begin{subfigure}[t]{0.49\linewidth}
            \centering
            \includegraphics[width=\linewidth]
            {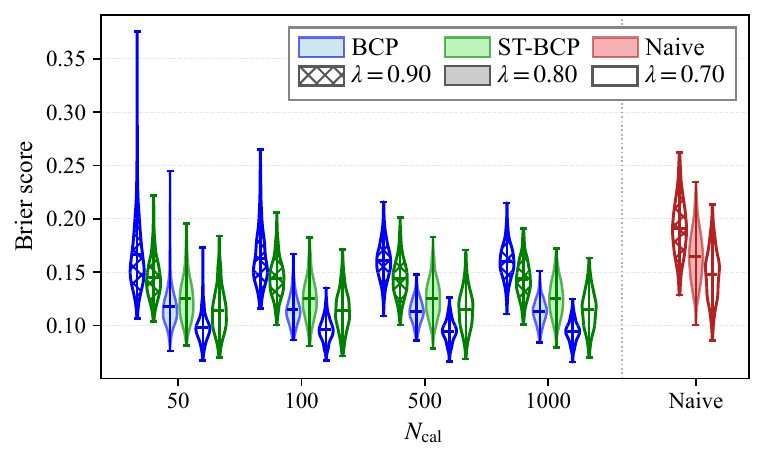}
            \caption{$\mathrm{SNR}=10$ dB.}
            \label{fig:brier_loc_10db}
        \end{subfigure}
        \hfill
        \begin{subfigure}[t]{0.49\linewidth}
            \centering
            \includegraphics[width=\linewidth]
            {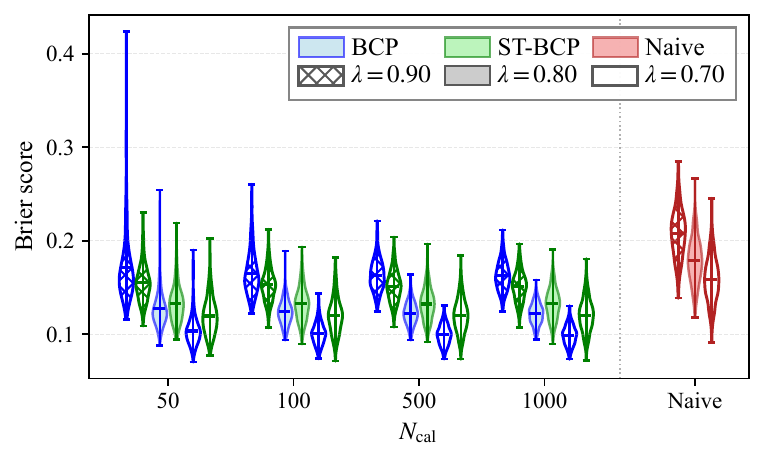}
            \caption{$\mathrm{SNR}=25$ dB.}
            \label{fig:brier_loc_25db}
        \end{subfigure}

        \addtocounter{figure}{-1}
        \captionof{figure}{Violin plots of the Brier score~\eqref{eq:metric_bs}
        versus the calibration size $N_{\mathrm{cal}}$ for levels
        $\lambda\in\{0.9,0.8,0.7\}$.}
        \label{fig:brier_loc}
    \end{minipage}
    \par
    \begin{minipage}[t]{\textwidth}
        \centering
        \begin{subfigure}[t]{0.49\linewidth}
            \centering
            \includegraphics[width=\linewidth]
            {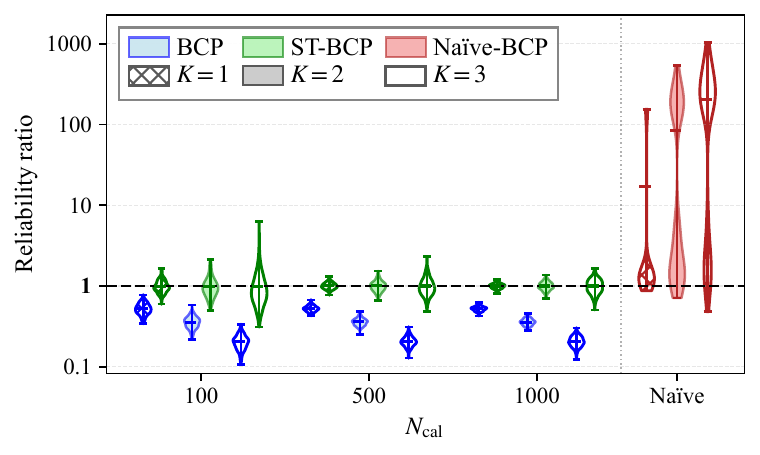}
            \caption{Reliability ratios of BCP, ST-BCP, and na\"{\i}ve-BCP.}
            \label{fig:bf_reliability_ratio}
        \end{subfigure}
        \hfill
        \begin{subfigure}[t]{0.49\linewidth}
            \centering
            \includegraphics[width=\linewidth]
            {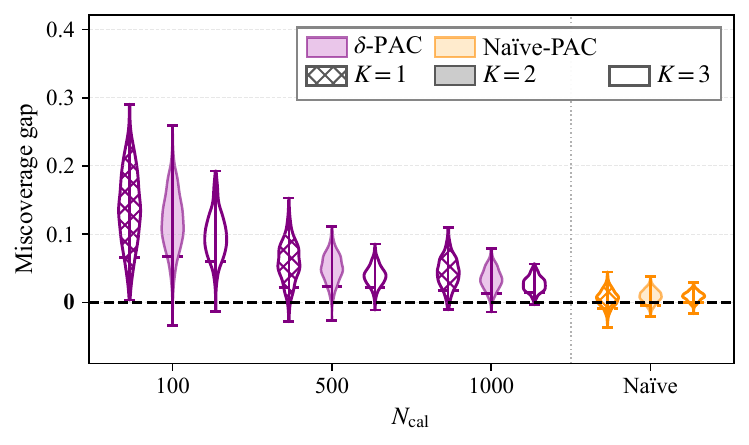}
            \caption{Miscoverage gaps of $\delta$-PAC CP and na\"{\i}ve-PAC.}
            \label{fig:bf_pac_gap}
        \end{subfigure}
        \captionof{figure}{Reliability performance versus the calibration size $N_{\mathrm{cal}}$
        for budgets $K\in\{1,2,3\}$. In (b), the dashed lines indicate the lower $\delta$-quantiles rather than the averages.}
        \label{fig:bf_reliability}
    \end{minipage}

\end{figure*}

Fig.~\ref{fig:brier_loc} evaluates the accuracy of the miscoverage
estimates in terms of the Brier score~\eqref{eq:metric_bs}. Across both SNR conditions and all considered gain levels $\lambda$, BCP and ST-BCP achieve comparable
average Brier scores, whereas the NME yields markedly higher scores due
to its systematic underestimation of the miscoverage level. Moreover, the Brier-score distributions of both BCP and ST-BCP become
more concentrated as $N_{\mathrm{cal}}$ increases, indicating more
stable accuracy across experiments.


\section{Codebook-Based Beam Identification}
\label{sec:bf}
In this section, we consider budget-aware codebook-based beam identification~\cite{othman2026diffusion}. 
Unlike the previous two applications, where the size constraint is imposed on each input separately, the probing budget here is imposed on average using \eqref{eq:constraint_rule_qd}. This setting is addressed by $\delta$-PAC CP (Sec.~\ref{subsec:pac_cp}), but we also provide a comparison with BCP (Sec.~\ref{subsec:bcp}) and
ST-BCP (Sec.~\ref{subsec:st_bcp}), which impose the stricter per-input set requirement~\eqref{eq:constraint_rule_qx}.

\subsection{Problem Formulation}
We address beam identification over a directional wireless channel using a codebook of $B$ predefined beams, where $y=b$, for $b\in\{1,\ldots,B\}$, denotes the $b$-th beam direction~\cite{othman2026diffusion}.
Given side information $x$ (i.e., the UE location), the BS aims to identify the optimal beam $y$ that yields the highest received power at the UE. 
The BS constructs a candidate beam set $\mathcal C(x)$ based on the predicted beam prior $\hat p(y| x)$, allowing it to probe the beams in this set using one pilot signal per beam to select the beam used for data transmission~\cite{orimogunje2026sensing}.

Probing a candidate beam consumes training and feedback resources, so we measure the probing overhead using the cardinality functional $g(\mathcal C(x))=|\mathcal C(x)|$. 
Here, we consider the case in which the probing overhead is budgeted jointly across multiple beam-identification instances and is therefore constrained on average~\cite{orimogunje2026sensing}.
We therefore impose the average operational constraint~\eqref{eq:constraint_rule_qd}, with an average budget of $K\in \mathbb{R}^{+}$ beams.


\subsection{BCP-Based Solution}
\label{subsec:bcp_bf}
The average constraint~\eqref{eq:constraint_rule_qd} is clearly met by imposing the stronger requirement that the set $\mathcal{C}(x)$ contain at most $K$ candidate beams, which can be addressed by BCP.

To obtain the predictive beam prior $\hat{p}(y|x)$ over the $B$ beams, we use the conditional diffusion model in~\cite{othman2026diffusion}. Let $y^{(b)}$ denote the beam with the $b$-th largest predictive probability, with ties broken arbitrarily. Using again the LL score in
\eqref{eq:nc_score_log}, the input-dependent constraint~\eqref{eq:constraint_rule_qx} with $G_{\max}=K$ then yields the threshold
\begin{equation}
    q(x)
    =
    -\log \hat{p}(
    y^{(K+1)}
    |
    x),
    \label{eq:q_bf_bcp}
\end{equation}
which is the score of the $(K+1)$-th most probable beam for input $x$. 
Evaluating~\eqref{eq:alpha_hat_closed_form} and~\eqref{eq:st_alpha} using this threshold with the original and transformed calibration scores, respectively, yields the per-input miscoverage estimates $\hat{\alpha}^{\mathrm{BCP}}(x,\mathcal D)$ and $\hat{\alpha}^{\mathrm{ST\text{-}BCP}}(x,\mathcal D)$, for the set $\mathcal C(x)$, both satisfying the reliability guarantee in~\eqref{eq:cvg_guar_e}.

\subsection{$\delta$-PAC-Based Solution}
\label{subsec:pac_bf}
While BCP imposes the same size limit on every input, $\delta$-PAC CP controls the average set size directly through its empirical counterpart in~\eqref{eq:constraint_rule_qd}, allowing the candidate-set size to vary across inputs.

Using the same beam ordering $\{y^{(b)}\}_{b=1}^{B}$ as in Sec.~\ref{subsec:bcp_bf}, we adopt the randomized APS score~\cite{romano2020classification}
\begin{equation}
    s(x,y^{(b)};U) = \sum_{j=1}^{b-1}\hat{p}(y^{(j)}|x) + U\hat{p}(y^{(b)}|x),
    \label{eq:score_bf_pac}
\end{equation}
where $U\sim\operatorname{Unif}[0,1]$ is an auxiliary random variable independent of calibration data $\mathcal D$ and the test pair $(x,y)$. To determine the threshold $q(\mathcal D)$ in~\eqref{eq:constraint_rule_qd}, we evaluate the scores $\{s(x_i,y^{(b)};U_i)\}_{i=1,b=1}^{N_{\mathrm{cal}},B}$ for all calibration inputs and beams. Let $s^{(\ell)}$ denote the $\ell$-th smallest of these scores. Accordingly, the threshold obtained by solving~\eqref{eq:constraint_rule_qd} is
\begin{equation}
    q(\mathcal D)
    =
    s^{\left(\left\lfloor N_{\mathrm{cal}}K\right\rfloor+1\right)}.
    \label{eq:q_bf_pac}
\end{equation}
Evaluating~\eqref{eq:pac_alpha_hat} using the calibration scores $\{s(x_i,y^{(b)};U_i)\}_{i=1}^{N_{\mathrm{cal}}}$ at the threshold $q(\mathcal D)$ in~\eqref{eq:q_bf_pac} yields the $\delta$-PAC miscoverage estimate $\hat{\alpha}^{\mathrm{PAC}}(\mathcal D)$ for the resulting candidate beam sets, with the reliability guarantee in~\eqref{eq:pac_guar}.

\begin{figure*}[!t]
    \centering

    \begin{minipage}[t]{0.5\textwidth}
        \centering
        \includegraphics[width=\linewidth]
        {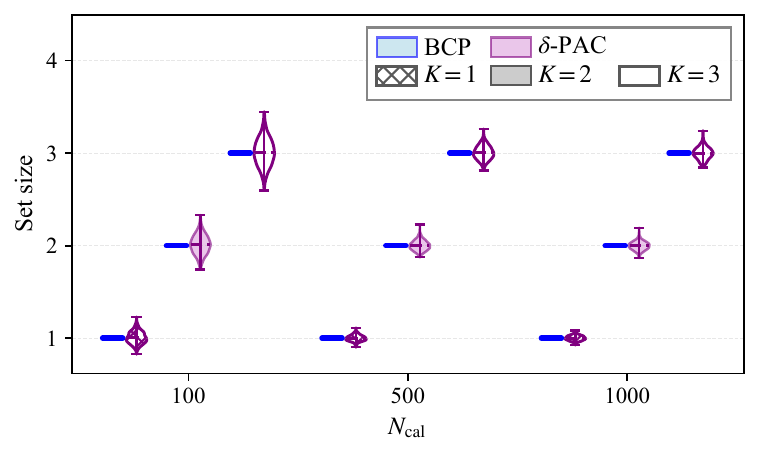}

        \captionof{figure}{Average candidate-set size~\eqref{eq:bf_avg_set_size} versus
        the calibration size $N_{\mathrm{cal}}$ for budgets
        $K\in\{1,2,3\}$.}
        \label{fig:bf_set_size}
    \end{minipage}
    \par
    \begin{minipage}[t]{\textwidth}
        \centering

        \begin{subfigure}[t]{0.49\linewidth}
            \centering
            \includegraphics[width=\linewidth]
            {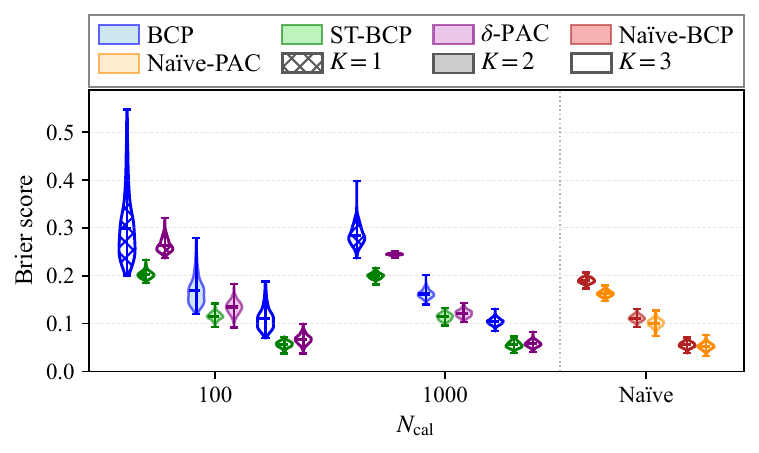}
            \caption{Brier score.}
            \label{fig:bf_brier}
        \end{subfigure}
        \hfill
        \begin{subfigure}[t]{0.49\linewidth}
            \centering
            \includegraphics[width=\linewidth]
            {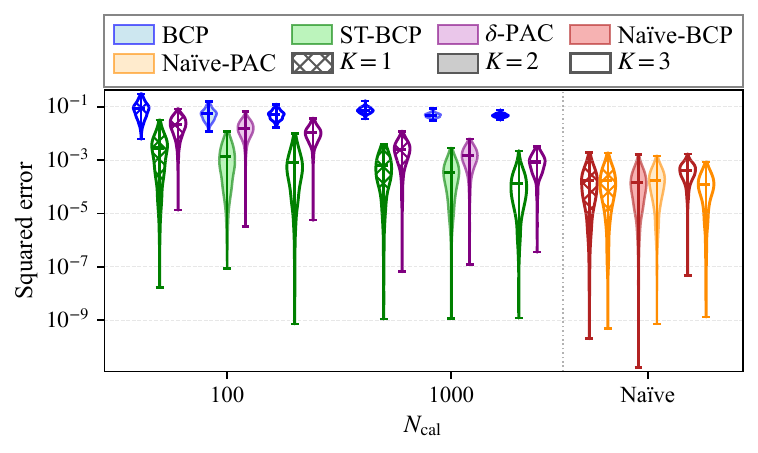}
            \caption{Squared gap.}
            \label{fig:bf_sg}
        \end{subfigure}

        \captionof{figure}{Accuracy of the miscoverage estimates versus the calibration
        size $N_{\mathrm{cal}}$ for budgets $K\in\{1,2,3\}$.}
        \label{fig:bf_accuracy}
    \end{minipage}

\end{figure*}

\subsection{Simulation Results}



\subsubsection{Simulation Setup}
Following the simulation setup in~\cite{othman2026diffusion}, we evaluate codebook-based beam identification using a dataset generated from the ray-traced DeepMIMO ASU outdoor scenario~\cite{alkhateeb2019deepmimo}. The dataset contains $85{,}157$ UE samples, with $90\%$ used for training and the remaining $10\%$ for calibration and testing. The BS employs a discrete Fourier transform codebook with $B=8$ beams. To generate the predictive beam prior $\hat{p}(y|x)$, we adopt the conditional diffusion model in~\cite{othman2026diffusion} with a multilayer perceptron denoiser, modifying only the training procedure by applying early stopping.

We evaluate BCP and ST-BCP in Sec.~\ref{subsec:bcp_bf}, and $\delta$-PAC CP in Sec.~\ref{subsec:pac_bf}, together with two na\"{\i}ve baselines, na\"{\i}ve-BCP and na\"{\i}ve-PAC, obtained by applying the na\"{\i}ve approach in Sec.~\ref{subsec:naive} to the candidate beam sets constructed by BCP and $\delta$-PAC CP, respectively. Unless otherwise specified, we set $\delta=0.1$ for $\delta$-PAC CP, and defer comparisons across different values of $\delta$ to Appendix~\ref{app:pac_delta}.

\subsubsection{Performance Metrics}
For all considered methods, performance is evaluated on the test dataset
$\mathcal{D}^{\mathrm{te}}=\{(x_j,y_j)\}_{j=1}^{N_{\mathrm{te}}}$, with
$N_{\mathrm{te}}=1000$, in terms of reliability, set size, and accuracy.
In each experiment, let $\hat{\alpha}^{\mathrm{PAC}}$ denote the miscoverage estimate for the $\delta$-PAC candidate beam sets, and let $\hat{\alpha}_j^{\mathrm{BCP}}$ and $m_j=\mathbbm{1}(y_j\notin\mathcal C(x_j))$ denote the miscoverage estimate for the BCP set $\mathcal C(x_j)$ and the true miscoverage indicator for the $j$-th test sample, respectively.

Based on these quantities, we assess reliability separately for BCP, ST-BCP and $\delta$-PAC CP according to the guarantees in~\eqref{eq:cvg_guar_e} and~\eqref{eq:pac_guar}, respectively. Specifically, BCP, ST-BCP and na\"{\i}ve-BCP
are evaluated using the reliability ratio in~\eqref{eq:metric_rr},
whereas $\delta$-PAC CP and na\"{\i}ve-PAC are evaluated using the
miscoverage gap
\begin{equation}
    \text{Miscoverage}~\text{gap}
    =
    \hat{\alpha}^{\mathrm{PAC}}
    -
    \frac{1}{N_{\mathrm{te}}}
    \sum_{j=1}^{N_{\mathrm{te}}}m_{j}.
    \label{eq:bf_pac_gap}
\end{equation}
The PAC guarantee requires the gap in~\eqref{eq:bf_pac_gap} to be nonnegative in at least a $1-\delta$ fraction of experiments.

Beyond reliability, we evaluate the probing overhead through the average
candidate-set size
\begin{equation}
    \text{Set size}
    =
    \frac{1}{N_{\mathrm{te}}}
    \sum_{j=1}^{N_{\mathrm{te}}}
    \left|\mathcal C(x_j)\right|.
    \label{eq:bf_avg_set_size}
\end{equation}
For the set-size metric~\eqref{eq:bf_avg_set_size}, we report only BCP and $\delta$-PAC CP, since the corresponding na\"{\i}ve baselines use the same candidate sets.

Finally, we evaluate accuracy using the Brier score in~\eqref{eq:metric_bs}
and the squared gap
\begin{equation}
    \text{Squared gap}
    =
    \left(
    \frac{1}{N_{\mathrm{te}}}
    \sum_{j=1}^{N_{\mathrm{te}}}
    (m_j-\hat{\alpha}_j)
    \right)^2,
    \label{eq:metric_sg}
\end{equation}
where $\hat{\alpha}_j=\hat{\alpha}_j^{\mathrm{BCP}}$ for BCP, while
$\hat{\alpha}_j=\hat{\alpha}^{\mathrm{PAC}}$ for all $j$ under
$\delta$-PAC CP.
We average these metrics over $N_{\mathrm{run}}=500$ independent experiments. In each experiment, disjoint calibration and test sets $\{\mathcal{D}^{\mathrm{cal}},\mathcal{D}^{\mathrm{te}}\}$ are randomly drawn from the held-out dataset.

\subsubsection{Performance Analysis}

Fig.~\ref{fig:bf_reliability} evaluates the reliability of all considered
schemes. Fig.~\ref{fig:bf_reliability_ratio} reports the reliability
ratio~\eqref{eq:metric_rr} for BCP, ST-BCP, and na\"{\i}ve-BCP using the
candidate sets determined by~\eqref{eq:q_bf_bcp}. Across all considered
values of $K$, the average ratios of BCP and ST-BCP remain below one, validating their reliability in the sense
of~\eqref{eq:cvg_guar_e}. In contrast, na\"{\i}ve-BCP frequently underestimates the true miscoverage
and yields reliability ratios substantially greater than one.
Fig.~\ref{fig:bf_pac_gap} presents the miscoverage
gap~\eqref{eq:bf_pac_gap} for $\delta$-PAC CP and na\"{\i}ve-PAC using
the candidate sets determined by~\eqref{eq:q_bf_pac}. For $\delta$-PAC
CP, the lower $\delta$-quantiles remain nonnegative across all values of
$K$, empirically validating the guarantee in~\eqref{eq:pac_guar}. The
corresponding quantiles for na\"{\i}ve-PAC can be negative, indicating
violations of the PAC reliability requirement.

Fig.~\ref{fig:bf_set_size} evaluates whether the candidate sets
constructed by BCP and $\delta$-PAC CP satisfy the average probing
budget $K$. The BCP sets satisfy this budget exactly by construction,
since each set contains $K$ beams. For the $\delta$-PAC sets, the mean
average size also equals $K$, while its variation across experiments
decreases as $N_{\mathrm{cal}}$ increases. This demonstrates increasingly
stable control of the average set size with more calibration data.

Fig.~\ref{fig:bf_accuracy} compares the accuracy of the miscoverage
estimates. In Fig.~\ref{fig:bf_brier}, BCP yields the largest Brier
scores, reflecting its conservative estimates. ST-BCP mitigates this
conservativeness, slightly outperforming $\delta$-PAC CP and approaching
the na\"{\i}ve baselines. This advantage stems from its input-dependent
estimates, whereas $\delta$-PAC CP uses a single estimate for all test
inputs. The squared gap results in Fig.~\ref{fig:bf_sg} indicate that both ST-BCP and
$\delta$-PAC CP achieve low estimation errors, with ST-BCP yielding a
slightly lower squared gap and performing closer to the na\"{\i}ve baselines.
To contextualize ST-BCP's slight accuracy advantage, we further
compare the considered methods across different
$\delta$ values in Appendix~\ref{app:pac_delta}, showing that ST-BCP fails the corresponding PAC guarantee
for most considered $\delta$ values.



\section{Conclusion}
\label{sec:conclusion}

This paper developed a post-hoc conformal framework for reliable uncertainty quantification under operational constraints with applications to wireless systems. Operationally constrained prediction sets are first constructed according to application-specific resource budgets or performance requirements, and their miscoverage is then reliably quantified. For input-dependent constraints, BCP and ST-BCP provide input-specific miscoverage estimates with average-ratio reliability guarantees, while $\delta$-PAC CP provides high-probability reliability guarantees when the constraint is imposed on average. The framework has been evaluated across diverse wireless applications, including narrowband interference detection, near-field localization, and codebook-based beam identification. Numerical results have validated these reliability guarantees and shown that the proposed framework maintains estimation accuracy comparable to that of the na\"{\i}ve approach across the considered applications. Future work may extend the proposed framework to distributed wireless settings involving multiple network nodes with decentralized data.


%

\appendix
\begin{figure*}[!t]
    \centering

    \begin{minipage}[t]{\textwidth}
        \vspace{0pt}
        \centering

        \includegraphics[width=0.5\linewidth]
        {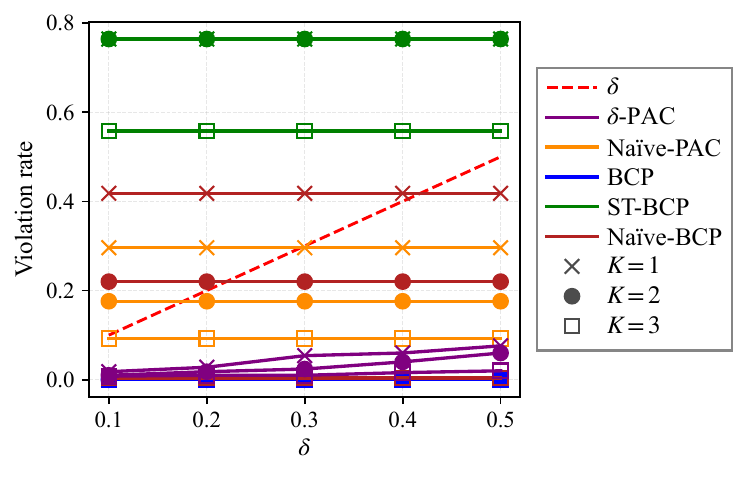}

        \captionof{figure}{Violation rate~\eqref{eq:pac_violation_rate} versus the
        failure level $\delta$ for $N_{\mathrm{cal}}=1000$ and budgets
        $K\in\{1,2,3\}$. The dashed line represents the target violation
        rate $\delta$.}
        \label{fig:pac_delta_violation}
    \end{minipage}
    \par
    \begin{minipage}[t]{\textwidth}
        \vspace{0pt}
        \centering

        \refstepcounter{figure}
        \setcounter{subfigure}{0}

        \begin{subfigure}[t]{0.49\linewidth}
            \vspace{0pt}
            \centering
            \includegraphics[width=\linewidth]
            {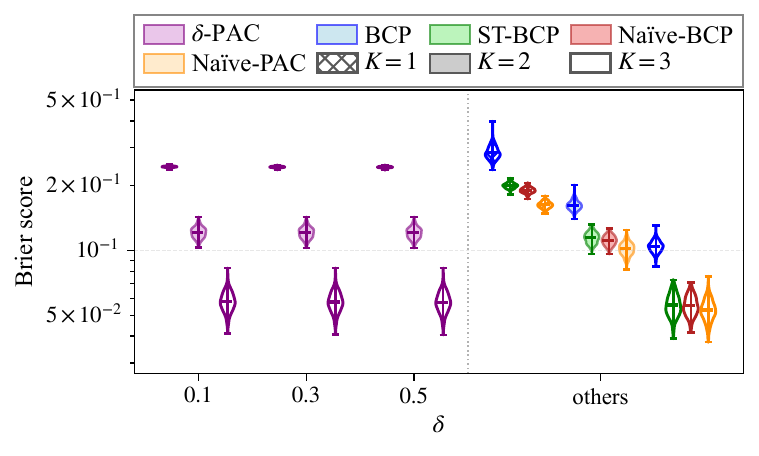}
            \caption{Brier score~\eqref{eq:metric_bs}.}
            \label{fig:pac_delta_brier}
        \end{subfigure}
        \hfill
        \begin{subfigure}[t]{0.49\linewidth}
            \vspace{0pt}
            \centering
            \includegraphics[width=\linewidth]
            {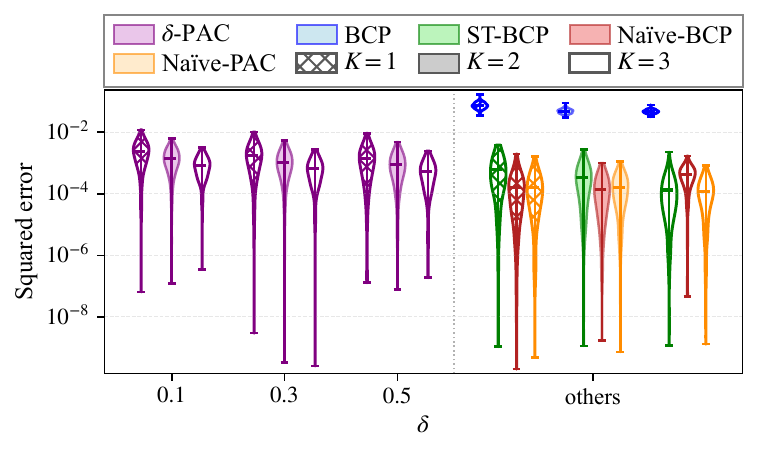}
            \caption{Squared gap~\eqref{eq:metric_sg}.}
            \label{fig:pac_delta_sg}
        \end{subfigure}

        \addtocounter{figure}{-1}
        \captionof{figure}{Accuracy of the miscoverage estimates versus the failure level $\delta$ for $N_{\mathrm{cal}}=1000$ and budgets $K\in\{1,2,3\}$.}
        \label{fig:pac_delta_accuracy}

    \end{minipage}

\end{figure*}

\subsection{Proof of Proposition~\ref{prop:bcp_closed_form_validity}}
\label{app:bcp_proof}

For the test input $x$ and candidate output $y$, define the e-variable~\cite{grunwald2024beyond}
\begin{equation}
    E(x,y)
    =
    \frac{
        s(x,y)
    }{
        \frac{1}{N_{\mathrm{cal}}+1}
        \left(
        \sum_{i=1}^{N_{\mathrm{cal}}} s(x_i,y_i)
        +
        s(x,y)
        \right)
    }.
    \label{eq:e_value}
\end{equation}
Using the score threshold
$q(x)$ in \eqref{eq:constraint_rule_qx}, define the corresponding threshold
e-value as
\begin{equation}
    E^{\mathrm{BCP}}(x,\mathcal D)
    =
    \frac{
        q(x)
    }{
        \frac{1}{N_{\mathrm{cal}}+1}
        \left(
        \sum_{i=1}^{N_{\mathrm{cal}}} s(x_i,y_i)
        +
        q(x)
        \right)
    }.
    \label{eq:bcp_threshold_evalue}
\end{equation}
Since $E(x,y)$ in \eqref{eq:e_value} is monotone increasing in $s(x,y)$, the
score-thresholded representation in \eqref{eq:score_threshold_set} implies
that the prediction set $\mathcal C(x)$ in \eqref{eq:threshold_set_qx} with threshold $q(x)$ can be equivalently
written as
\begin{equation}
    \mathcal C(x)
    =
    \left\{
    y\in\mathcal Y:
    E(x,y)
    <
    E^{\mathrm{BCP}}(x,\mathcal D)
    \right\}.
    \label{eq:app_set_evalue}
\end{equation}

Following~\cite{gauthier2025values,gauthier2025backward},
$\hat{\alpha}^{\mathrm{BCP}}(x,\mathcal D)$ is implicitly defined as
\begin{equation}
\label{eq:alpha_hat_implicit}
\begin{aligned}
\hat{\alpha}^{\mathrm{BCP}}(x,\mathcal D)
=
\inf_{\alpha >0}
&\; \alpha \\
\mathrm{s.t.}
&\;
g\left(
\left\{
y\in\mathcal Y:
E(x,y)<\frac{1}{\alpha}
\right\}
\right)
\le
G_{\max}.
\end{aligned}
\end{equation}
Using the threshold e-value in \eqref{eq:bcp_threshold_evalue}, the infimum
in \eqref{eq:alpha_hat_implicit} is attained when
\begin{equation}
    \hat{\alpha}^{\mathrm{BCP}}(x,\mathcal D)
    =
    \frac{1}{E^{\mathrm{BCP}}(x,\mathcal D)}.
    \label{eq:app_alpha_inverse}
\end{equation}
Substituting \eqref{eq:bcp_threshold_evalue} into
\eqref{eq:app_alpha_inverse} yields the closed-form expression in
\eqref{eq:alpha_hat_closed_form}.

From \eqref{eq:app_set_evalue} and \eqref{eq:app_alpha_inverse}, we have the
equivalence relationship
\begin{equation}
    y
    \notin
    \mathcal C_{q(x)}(x)
    \Longleftrightarrow
    E(x,y)
    \ge
    \frac{1}{
    \hat{\alpha}^{\mathrm{BCP}}(x,\mathcal D)
    }.
    \label{eq:app_miscov_evalue}
\end{equation}
For any positive data-dependent miscoverage level $\tilde{\alpha}>0$ that
may depend on $x$ and $\mathcal D$, the e-variable in \eqref{eq:e_value}
satisfies the post-hoc validity property~\cite{koning2023post}
\begin{equation}
    \mathbb{E}
    \left[
    \frac{
    \mathbbm{1}
    \left(
    E(x,y)
    \ge
    1/\tilde{\alpha}
    \right)
    }{
    \tilde{\alpha}
    }
    \right]
    \le
    1.
    \label{eq:app_post_hoc_e}
\end{equation}
Substituting
$\tilde{\alpha}=\hat{\alpha}^{\mathrm{BCP}}(x,\mathcal D)$ into
\eqref{eq:app_post_hoc_e} and using \eqref{eq:app_miscov_evalue} directly
yields \eqref{eq:post_hoc}.

Finally, since $\hat{\alpha}^{\mathrm{BCP}}(x,\mathcal D)>0$, \eqref{eq:alpha_hat_closed_form} implies that $\hat{\alpha}^{\mathrm{BCP}}(x,\mathcal D)\in(0,1)$ if and only if $q(x)>N_{\mathrm{cal}}^{-1}\sum_{i=1}^{N_{\mathrm{cal}}}s(x_i,y_i)$. This condition is natural, as it requires the threshold to retain outputs that are at least as plausible as a typical true output in the calibration data.

\subsection{Construction of the D\"umbgen--Wellner Confidence Band}
\label{app:dw_band}

Define the adjusted KL
divergence as
\begin{equation}
\mathcal K_{\nu}(a,b)=N_{\mathrm{cal}}K(a,b)-C_{\nu}(a,b)  ,
\label{eq:adjusted_KL}
\end{equation}
where
$K(a,b)=a\log(a/b)+(1-a)\log((1-a)/(1-b))$,
$C_{\nu}(a,b)=\min_{\min\{a,b\}\le t\le\max\{a,b\}}
\{C(t)+\nu D(t)\}$,
$C(t)=\log(\log(e/(4t(1-t))))$, and
$D(t)=\log(1+C(t)^2)$.
Based on \eqref{eq:empirical_cdf} and \eqref{eq:adjusted_KL}, the DW statistic is defined as~\cite{dumbgen2023new}
\begin{equation}
    T_{N_{\mathrm{cal}},\nu}^{\mathrm{DW}}
    =
    \sup_{q\in\mathbb R}
    \left\{
    \mathcal K_{\nu}
    \left(
    \hat F(q;\mathcal D),
    F(q)
    \right)
    \right\}.
    \label{eq:dw_statistic}
\end{equation}
Let $\kappa_{N_{\mathrm{cal}}}^{\mathrm{DW}}(\delta)$ denote the
$(1-\delta)$-quantile of the statistic in \eqref{eq:dw_statistic}. When
$F(\cdot)$ is continuous, this quantile is distribution-free and can be
evaluated through Monte-Carlo simulation using standard uniform random
variables~\cite{dumbgen2023new}. We refer to~\cite{dumbgen2023new} for details on computing
$\kappa_{N_{\mathrm{cal}}}^{\mathrm{DW}}(\delta)$.

Following~\cite{sarkar2023post,dumbgen2023new}, the DW upper confidence bound
is given by
\begin{equation}
\begin{aligned}
u_{\delta}^{\mathrm{DW}}(q;\mathcal D)
&=
\mathcal U_{\delta}^{\mathrm{DW}}
\left(\hat F(q;\mathcal D)\right)
\triangleq
\max\left\{
u\in\left(\hat F(q;\mathcal D),1\right]:
\right.
\\[-1mm]
&\hspace{1.6em}\left.
\mathcal K_{\nu}\left(\hat F(q;\mathcal D),u\right)
\le
\kappa_{N_{\mathrm{cal}}}^{\mathrm{DW}}(\delta)
\right\}.
\end{aligned}
\label{eq:dw_upper_band}
\end{equation}
with $u_{\delta}^{\mathrm{DW}}(q;\mathcal D)=1$ when
$\hat F(q;\mathcal D)=1$.
The corresponding lower confidence bound is obtained by applying the same
upper-bound functional to the complementary empirical CDF level, i.e.,
\begin{equation}
    \ell_{\delta}^{\mathrm{DW}}(q;\mathcal D)
    =
    1
    -
    \mathcal U_{\delta}^{\mathrm{DW}}
    \left(
    1-\hat F(q;\mathcal D)
    \right).
    \label{eq:dw_lower_from_upper}
\end{equation}
Using the symmetry
$\mathcal K_{\nu}(1-a,1-b)=\mathcal K_{\nu}(a,b)$ and substituting
$l=1-u$, \eqref{eq:dw_lower_from_upper} can be equivalently written as
\begin{equation}
    \ell_{\delta}^{\mathrm{DW}}(q;\mathcal D)
    =
    \min
    \left\{
    l\in
    \left[
    0,
    \hat F(q;\mathcal D)
    \right):
    \mathcal K_{\nu}
    \left(
    \hat F(q;\mathcal D),
    l
    \right)
    \le
    \kappa_{N_{\mathrm{cal}}}^{\mathrm{DW}}(\delta)
    \right\},
    \label{eq:dw_lower_band}
\end{equation}
with $\ell_{\delta}^{\mathrm{DW}}(q;\mathcal D)=0$ when
$\hat F(q;\mathcal D)=0$.
Thus, \eqref{eq:dw_lower_band} gives the lower confidence bound used in
\eqref{eq:cdf_band}.

\subsection{PAC Miscoverage Estimation Using Conformal Regret Miscoverage Estimate}
\label{app:creme}

The Conformal REgret Miscoverage Estimate (CREME) method presented in~\cite{zhou2025calibrating} first splits the calibration dataset into disjoint subsets $D_1$ and $ D_2$ with $\mathcal D=\mathcal D_1\cup\mathcal D_2$, using $\mathcal D_1$ to select
the threshold $q(\mathcal D_1)$ according to~\eqref{eq:constraint_rule_qd}
and $\mathcal D_2$ to calibrate the corresponding miscoverage.
Let $N_2=|\mathcal D_2|$. Combining Definition~3.3 and Proposition~3.5 in~\cite{zhou2025calibrating} yields
the estimate
\begin{equation}
\hat{\alpha}^{\mathrm{CREME}}(\mathcal D)
=
\min\left\{
\frac{
N_2\left[
1-\hat{F}\left(q(\mathcal D_1);\mathcal D_2\right)
\right]
+
\sqrt{\frac{N_2}{2}\log\frac{2}{\delta}}
+4
}{
N_2+1
},
1
\right\}.
\label{eq:creme_alpha}
\end{equation}
For comparison, since the DKW bound in~\eqref{eq:dkw_lower_band} holds
simultaneously over all thresholds $q$, it can be evaluated at
$q(\mathcal D_1)$ using $\mathcal D_2$. The corresponding estimate
in~\eqref{eq:pac_alpha_hat} is
\begin{equation}
\hat{\alpha}^{\mathrm{DKW}}(\mathcal D)
=
\min\left\{
1-\hat{F}\left(q(\mathcal D_1);\mathcal D_2\right)
+
\sqrt{\frac{\log(2/\delta)}{2N_2}},
1
\right\}.
\label{eq:dkw_alpha}
\end{equation}

When neither estimate in~\eqref{eq:creme_alpha}
and~\eqref{eq:dkw_alpha} is truncated at one, their difference is
\begin{equation}
\hat{\alpha}^{\mathrm{CREME}}(\mathcal D)
-
\hat{\alpha}^{\mathrm{DKW}}(\mathcal D)
=
\frac{
3+\hat F\left(q(\mathcal D_1);\mathcal D_2\right)
-
\sqrt{\frac{\log(2/\delta)}{2N_2}}
}{
N_2+1
}.
\label{eq:creme_dkw_diff}
\end{equation}
Since we have $\hat F(q(\mathcal D_1);\mathcal D_2)\ge0$, the difference
in~\eqref{eq:creme_dkw_diff} is positive for any number of data points $N_2\ge1$ when
$\delta>2e^{-18}\approx3.05\times10^{-8}$.
As truncation at one preserves this ordering, for practical choices of $\delta$, we have $\hat{\alpha}^{\mathrm{CREME}}(\mathcal D)\ge\hat{\alpha}^{\mathrm{DKW}}(\mathcal D)$. Thus, CREME yields a miscoverage estimate no smaller than that obtained using the DKW bound, and we do not further consider CREME in this study.

\subsection{Impact of the Failure Level $\delta$}
\label{app:pac_delta}

We further examine the impact of the failure level $\delta$ on reliability
and accuracy. To assess the PAC criterion for all considered methods,
we define the violation rate as
\begin{equation}
    \mathrm{Violation~rate}
    =
    \frac{1}{N_{\mathrm{run}}}
    \sum_{r=1}^{N_{\mathrm{run}}}
    \mathbbm{1}
    \left\{
    \frac{1}{N_{\mathrm{te}}}
    \sum_{j=1}^{N_{\mathrm{te}}}m_j^{(r)}
    >
    \frac{1}{N_{\mathrm{te}}}
    \sum_{j=1}^{N_{\mathrm{te}}}\hat{\alpha}_j^{(r)}
    \right\},
    \label{eq:pac_violation_rate}
\end{equation}
where the superscript $r$ identifies the experiment. For $\delta$-PAC
CP, the guarantee in~\eqref{eq:pac_guar} requires the violation rate
in~\eqref{eq:pac_violation_rate} to be no larger than $\delta$.

As expected, Fig.~\ref{fig:pac_delta_violation} shows that the violation rate of $\delta$-PAC CP remains below $\delta$ in all considered settings, consistent with its PAC guarantee~\eqref{eq:pac_guar}. BCP also meets the PAC criterion empirically owing to its conservative miscoverage estimates, although it is not guaranteed to satisfy~\eqref{eq:pac_guar}. In contrast, ST-BCP and the na\"{\i}ve methods fail to satisfy the criterion in all considered settings.

Fig.~\ref{fig:pac_delta_accuracy} evaluates accuracy using the Brier
score~\eqref{eq:metric_bs} and squared gap~\eqref{eq:metric_sg}, both
of which decrease slightly for $\delta$-PAC CP as $\delta$ increases. Moreover, $\delta$-PAC CP yields lower errors than BCP across all considered values of $\delta$, while ST-BCP and the na\"{\i}ve methods generally achieve the lowest errors at the cost of failing to satisfy~\eqref{eq:pac_guar} in most considered settings.






\bibliographystyle{IEEEtran}
\bibliography{./bibtex/bib/IEEEabrv,./bibtex/bib/IEEEreference}

@IEEEtranBSTCTL{IEEEexample:BSTcontrol,
  CTLuse_forced_etal       = "yes",
  CTLmax_names_forced_etal = "3",
  CTLnames_show_etal       = "3",
  CTLdash_repeated_names   = "no"
}

@inproceedings{robinson2023narrowband,
  title={Narrowband interference detection via deep learning},
  author={Robinson, Clifton Paul and Uvaydov, Daniel and D'Oro, Salvatore and Melodia, Tommaso},
  booktitle={ICC 2023-IEEE Int. Conf. Commun.},
  pages={6379--6384},
  year={2023},
  organization={IEEE}
}

@article{huang2025calibrating,
  title={Calibrating Bayesian learning via regularization, confidence minimization, and selective inference},
  author={Huang, Jiayi and Park, Sangwoo and Simeone, Osvaldo},
  journal={IEEE Transactions on Signal Processing},
  volume={73},
  pages={4492--4505},
  year={2025},
  publisher={IEEE}
}

@article{gauthier2025backward,
  title={Backward Conformal Prediction},
  author={Gauthier, Etienne and Bach, Francis and Jordan, Michael I},
  journal={arXiv preprint arXiv:2505.13732},
  year={2025}
}

@article{gauthier2025values,
  title={E-values expand the scope of conformal prediction},
  author={Gauthier, Etienne and Bach, Francis and Jordan, Michael I},
  journal={arXiv preprint arXiv:2503.13050},
  year={2025}
}

@article{angelopoulos2021gentle,
  title={A gentle introduction to conformal prediction and distribution-free uncertainty quantification},
  author={Angelopoulos, Anastasios N and Bates, Stephen},
  journal={Found. Trends Mach. Learn.},
  volume={16},
  number={4},
  pages={494--591},
  year={2023},
  month ={Jul.},
}

@article{cohen2023calibrating,
  title={Calibrating {AI} models for wireless communications via conformal prediction},
  author={Cohen, Kfir M and Park, Sangwoo and Simeone, Osvaldo and Shitz, Shlomo Shamai},
  journal={IEEE Trans. Mach. Learn. Commun. Netw.},
  volume={1},
  pages={296--312},
  year={2023},
  publisher={IEEE}
}

@book{vovk2005algorithmic,
  title={Algorithmic learning in a random world},
  author={Vovk, Vladimir and Gammerman, Alexander and Shafer, Glenn},
  volume={29},
  year={2005},
  month ={Mar.},
  publisher={Springer}
}

@inproceedings{guo2017calibration,
  title={On calibration of modern neural networks},
  author={Guo, Chuan and Pleiss, Geoff and Sun, Yu and Weinberger, Kilian Q},
  booktitle={Proc. Int. Conf. Mach. Learn. (ICML)},
  pages={1321--1330},
  year={2017},
  organization={PMLR}
}

@article{zecchin2023robust,
  author  = {M. Zecchin and S. Park and O. Simeone and M. Kountouris and D. Gesbert},
  title   = {Robust Bayesian learning for reliable wireless {AI}: Framework and applications},
  journal = {IEEE Trans. Cogn. Commun. Netw.},
  volume  = {9},
  number  = {4},
  pages   = {897--912},
  year    = {2023}
}

@article{mozaffarikhosravi2025localization,
  title={Localization-Based Beam Focusing in Near-Field Communications},
  author={Mozaffarikhosravi, Nima and Dharmawansa, Prathapasinghe and Atzeni, Italo},
  journal={IEEE Wireless Commun. Lett.},
  volume={15},
  pages={795--799},
  year={2025},
  publisher={IEEE}
}

@article{sarkar2023post,
  title={Post-selection inference for conformal prediction: Trading off coverage for precision},
  author={Sarkar, Siddhaarth and Kuchibhotla, Arun Kumar},
  journal={arXiv preprint arXiv:2304.06158},
  year={2023}
}

@article{massart1990tight,
  title={The tight constant in the Dvoretzky--Kiefer--Wolfowitz inequality},
  author={Massart, Pascal},
  journal={Ann. Probab.},
  volume={18},
  number={3},
  pages={1269--1283},
  year={1990}
}

@article{dumbgen2023new,
  title={A new approach to tests and confidence bands for distribution functions},
  author={D{\"u}mbgen, Lutz and Wellner, Jon A.},
  journal={Ann. Statist.},
  volume={51},
  number={1},
  pages={260--289},
  year={2023},
  publisher={Institute of Mathematical Statistics}
}

@article{koning2023post,
  title={Post-hoc $\alpha$ Hypothesis Testing and the Post-hoc $ p $-value},
  author={Koning, Nick W},
  journal={arXiv preprint arXiv:2312.08040},
  year={2023}
}

@article{koning2026right,
  title={The`Right'Extension of Type-{I} Error to Data-Dependent Levels},
  author={Koning, Nick W},
  journal={arXiv preprint arXiv:2605.28429},
  year={2026}
}

@article{kiyani2024length,
  title={Length Optimization in Conformal Prediction},
  author={Kiyani, S. and Pappas, G. and Hassani, H.},
  journal={Adv. Neural Inf. Process. Syst.},
  volume={37},
  pages={99519--99563},
  year={2024}
}

@article{koning2025fuzzy,
  title={Fuzzy prediction sets: Conformal prediction with e-values},
  author={Koning, Nick W and van Meer, Sam},
  journal={arXiv preprint arXiv:2509.13130},
  year={2025}
}

@article{romano2020classification,
  title={Classification With Valid and Adaptive Coverage},
  author={Romano, Y. and Sesia, M. and Cand{\`e}s, E.},
  journal={Adv. Neural Inf. Process. Syst.},
  volume={33},
  pages={3581--3591},
  year={2020}
}

@article{grunwald2024beyond,
  title={Beyond {Neyman--Pearson}: E-Values Enable Hypothesis Testing With a Data-Driven Alpha},
  author={Gr{\"u}nwald, P. D.},
  journal={Proc. Nat. Acad. Sci.},
  volume={121},
  number={39},
  pages={e2302098121},
  year={2024}
}

@article{selvan2017fraunhofer,
  title={{Fraunhofer} and {Fresnel}: Unified derivation for aperture antennas},
  author={Selvan, Krishnasamy T and Janaswamy, Ramakrishna},
  journal={IEEE Antennas Propag. Mag.},
  volume={59},
  number={4},
  pages={12--15},
  year={2017},
  publisher={IEEE}
}

@article{gavriilidis2025nearfield,
  author  = {Panagiotis Gavriilidis and George C. Alexandropoulos},
  title   = {Near-Field Beam Tracking With Extremely Large Dynamic Metasurface Antennas},
  journal = {IEEE Trans. Wireless Commun.},
  year    = {2025},
  volume  = {24},
  number  = {7},
  pages   = {6257--6272},
}

@article{su2026reliable,
  title={Reliable Narrowband Interference Detection via Backward Conformal Prediction},
  author={Su, Xin and Zhu, Meiyi and Simeone, Osvaldo and Di Renzo, Marco and Fischione, Carlo},
  journal={arXiv preprint arXiv:2605.02486},
  year={2026}
}

@article{othman2026diffusion,
  title={Diffusion-Based Generative Priors for Efficient Beam Alignment in Directional Networks},
  author={Othman, Esraa Fahmy and Bariah, Lina and Debbah, Merouane},
  journal={arXiv preprint arXiv:2604.09653},
  year={2026}
}

@article{orimogunje2026sensing,
  title={Sensing-Assisted Adaptive Beam Probing With Calibrated Multimodal Priors and Uncertainty-Aware Scheduling},
  author={Orimogunje, Abidemi and Ninkovic, Vukan and Kundacina, Ognjen and Park, Hyunwoo and Kim, Sunwoo and Vukobratovic, Dejan and Twahirwa, Evariste and Gashema, Gaspard},
  journal={IEEE Wireless Commun. Lett.},
  volume={15},
  pages={2438--2442},
  year={2026},
  publisher={IEEE}
}

@inproceedings{liu2026improving,
  title={Improving Backward Conformal Prediction via Non-Conformity Score Transformation},
  author={Liu, Junxian and Zeng, Hao and Wei, Hongxin},
  booktitle={Proc. 43rd Int. Conf. Mach. Learn. (ICML)},
  year={2026}
}

@inproceedings{kiyani2025decision,
  title={Decision Theoretic Foundations for Conformal Prediction: Optimal Uncertainty Quantification for Risk-Averse Agents},
  author={Kiyani, Shayan and Pappas, George J. and Roth, Aaron and Hassani, Hamed},
  booktitle={Proc. 42nd Int. Conf. Mach. Learn. (ICML)},
  pages={30943--30965},
  year={2025}
}

@article{alkhateeb2019deepmimo,
  title={Deep{MIMO}: A generic deep learning dataset for millimeter wave and massive {MIMO} applications},
  author={Alkhateeb, Ahmed},
  journal={arXiv preprint arXiv:1902.06435},
  year={2019}
}

@article{same2021multiple,
  author  = {Same, Mohammad Hossein and Gleeton, Gabriel and
             Gandubert, Gabriel and Ivanov, Preslav and
             Landry, Jr., Ren{\'e}},
  title   = {Multiple Narrowband Interferences Characterization,
             Detection and Mitigation Using Simplified {Welch}
             Algorithm and Notch Filtering},
  journal = {Appl. Sci.},
  volume  = {11},
  number  = {3},
  pages   = {1331},
  year    = {2021}
}

@article{sheemar2025joint,
  title={{Joint holographic beamforming and user scheduling with individual {QoS} constraints}},
  author={Sheemar, Chandan Kumar and Thomas, Christo Kurisummoottil and Alexandropoulos, George C. and Querol, Jorge and Chatzinotas, Symeon and Saad, Walid},
  journal={IEEE Trans. Veh. Tech.},
  volume={74},
  number={10},
  pages={15963--15977},
  year={2025}
}

@inproceedings{bedeer2012adaptive,
  author    = {E. Bedeer and M. Marey and O. A. Dobre and K. E. Baddour},
  title     = {Adaptive bit allocation for {OFDM} cognitive radio systems with imperfect channel estimation},
  booktitle = {Proc. IEEE Radio Wireless Symp. (RWS)},
  pages     = {359--362},
  year      = {2012}
}

@article{simeone2026decision,
  title={Decision Making Needs Uncertainty Quantification [Lecture Notes]},
  author={Simeone, Osvaldo},
  journal={arXiv preprint arXiv:2607.14407},
  year={2026}
}

@article{zhou2025calibrating,
  title={Calibrating decision robustness via inverse conformal risk control},
  author={Zhou, Wenbin and Zhu, Shixiang},
  journal={arXiv preprint arXiv:2510.07750},
  year={2025}
}

@article{garcia2021tutorial,
  title={A Tutorial on 5G NR V2X Communications},
  author={M. H. C. Garcia and A. Molina-Galan and M. Boban and J. Gozalvez and B. Coll-Perales and T. {\c{S}}ahin and A. Kousaridas},
  journal={IEEE Commun. Surveys Tuts.},
  volume={23},
  number={3},
  pages={1972--2026},
  year={2021}
}

@article{ye2018machine,
  title={Machine learning for vehicular networks: Recent advances and application examples},
  author={Ye, Hao and Liang, Le and Li, Geoffrey Ye and Kim, JoonBeom and Lu, Lu and Wu, May},
  journal={IEEE Veh. Technol. Mag.},
  volume={13},
  number={2},
  pages={94--101},
  year={2018}
}

@inproceedings{masegosa2020learning,
  author    = {A. Masegosa},
  title     = {Learning under Model Misspecification: Applications to Variational and Ensemble Methods},
  booktitle = {Adv. Neural Inf. Process. Syst.},
  volume    = {33},
  pages     = {5479--5491},
  year      = {2020}
}

@article{popovski2018wireless,
  title={Wireless Access for Ultra-Reliable Low-Latency Communication: Principles and Building Blocks},
  author={Popovski, Petar and Nielsen, Jimmy J and Stefanovic, Cedomir and De Carvalho, Elisabeth and Strom, Erik and Trillingsgaard, Kasper F and Bana, Alexandru-Sabin and Kim, Dong Min and Kotaba, Radoslaw and Park, Jihong and others},
  journal={IEEE Netw.},
  volume={32},
  number={2},
  pages={16--23},
  year={2018},
  publisher={IEEE}
}

@article{sze2017efficient,
  title={Efficient processing of deep neural networks: A tutorial and survey},
  author={Sze, Vivienne and Chen, Yu-Hsin and Yang, Tien-Ju and Emer, Joel S},
  journal={Proc. IEEE},
  volume={105},
  number={12},
  pages={2295--2329},
  year={2017},
  publisher={IEEE}
}

@inproceedings{su2025conformal,
  title={Conformal Robust Beamforming via Generative Channel Models},
  author={Su, Xin and Hou, Qiushuo and He, Ruisi and Simeone, Osvaldo},
  booktitle={Proc. IEEE Int. Workshop Signal Process. Artif. Intell. Wireless Commun. (SPAWC)},
  pages={1--5},
  year={2025},
}

@article{chenreddy2022data,
  title={Data-driven conditional robust optimization},
  author={Chenreddy, Abhilash Reddy and Bandi, Nymisha and Delage, Erick},
  journal={Adv. Neural Inf. Process. Syst.},
  volume={35},
  pages={9525--9537},
  year={2022}
}

@article{lu2020positioning,
  title={Positioning-aided {3D} beamforming for enhanced communications in mmWave mobile networks},
  author={Lu, Yi and Koivisto, Mike and Talvitie, Jukka and Valkama, Mikko and Lohan, Elena Simona},
  journal={IEEE Access},
  volume={8},
  pages={55513--55525},
  year={2020},
  publisher={IEEE}
}

@book{simeone2022machine,
  author    = {O. Simeone},
  title     = {Machine Learning for Engineers},
  publisher = {Cambridge Univ. Press},
  address   = {Cambridge, U.K.},
  year      = {2022}
}

@article{tedeschini2024real,
  author  = {B. C. Tedeschini and G. Kwon and M. Nicoli and M. Z. Win},
  title   = {Real-time {Bayesian} neural networks for {6G} cooperative positioning and tracking},
  journal = {IEEE J. Sel. Areas Commun.},
  vol.    = {42},
  no.     = {9},
  pages   = {2322--2338},
  year    = {2024}
}

@article{zhang2026efficient,
  author  = {Z. Zhang and C. Miao and J. Yang and Y. Chen},
  title   = {Efficient uncertainty quantification for power allocation in wireless networks},
  journal = {J. King Saud Univ. Comput. Inf. Sci.},
  vol.    = {38},
  no.     = {3},
  pages   = {106},
  year    = {2026}
}

@article{dong2020survey,
  author  = {X. Dong and Z. Yu and W. Cao and Y. Shi and Q. Ma},
  title   = {A survey on ensemble learning},
  journal = {Front. Comput. Sci.},
  vol.    = {14},
  no.     = {2},
  pages   = {241--258},
  year    = {2020}
}

@article{jankov2026reliability,
  author  = {M. Jankov and C. Fischione},
  title   = {Reliability-aware scheduling for {Digital Twin} maintenance},
  journal = {arXiv preprint arXiv:2608.21866},
  year    = {2026}
}

@article{sabanovic2026calibration,
  author  = {D. {\v{S}}abanovi{\'c} and T. Kr{\v{c}}mar and Z. Krpi{\'c} and I. Luki{\'c}},
  title   = {Calibration, architecture, and distribution shift in predictive uncertainty estimation},
  journal = {Mach. Learn. Knowl. Extr.},
  vol.    = {8},
  no.     = {7},
  pages   = {179},
  year    = {2026}
}

@incollection{platt1999probabilistic,
  author    = {John C. Platt},
  title     = {Probabilistic Outputs for Support Vector Machines and Comparisons to Regularized Likelihood Methods},
  booktitle = {Advances in Large Margin Classifiers},
  editor    = {Alexander J. Smola and Peter L. Bartlett and Bernhard Sch{\"o}lkopf and Dale Schuurmans},
  pages     = {61--74},
  publisher = {MIT Press},
  address   = {Cambridge, MA, USA},
  year      = {1999}
}

@inproceedings{zadrozny2002transforming,
  author    = {Bianca Zadrozny and Charles Elkan},
  title     = {Transforming Classifier Scores into Accurate Multiclass Probability Estimates},
  booktitle = {Proc. 8th ACM SIGKDD Int. Conf. Knowledge Discovery and Data Mining (KDD)},
  pages     = {694--699},
  address   = {Edmonton, AB, Canada},
  publisher = {ACM},
  year      = {2002},
}

@inproceedings{kull2019beyond,
  author    = {Meelis Kull and Miquel Perello Nieto and Markus K{\"a}ngsepp and Telmo Silva Filho and Hao Song and Peter Flach},
  title     = {Beyond Temperature Scaling: Obtaining Well-Calibrated Multi-Class Probabilities with {Dirichlet} Calibration},
  booktitle = {Adv. Neural Inf. Process. Syst. (NeurIPS)},
  volume    = {32},
  pages     = {12316--12326},
  year      = {2019}
}

@inproceedings{tomani2022parameterized,
  title={Parameterized temperature scaling for boosting the expressive power in post-hoc uncertainty calibration},
  author={Tomani, Christian and Cremers, Daniel and Buettner, Florian},
  booktitle={Proc. Eur. Conf. Comput. Vis. (ECCV)},
  pages={555--569},
  year={2022},
  organization={Springer}
}

@inproceedings{judah2024demonstrating,
  title={Demonstrating confidence in radio frequency machine learning systems},
  author={Judah, Matthew and Kuzdeba, Scott},
  booktitle={Proc. {IEEE} Workshop Signal Process. Syst. ({SiPS})},
  pages={13--18},
  year={2024},
  organization={IEEE}
}

@article{raina2025trust,
  title={To trust or not to trust: On calibration in {ML}-based resource allocation for wireless networks},
  author={Raina, Rashika and Simmons, Nidhi and Simmons, David E and Yacoub, Michel Daoud and Duong, Trung Q},
  journal={{IEEE} Trans. Netw. Sci. Eng.},
  year={2025},
  publisher={IEEE}
}

@inproceedings{cohen2023calibratingc,
  title={Calibrating {AI} Models for Few-Shot Demodulation via Conformal Prediction},
  author={Cohen, Kfir M. and Park, Sangwoo and Simeone, Osvaldo and Shitz, Shlomo Shamai},
  booktitle={Proc. IEEE Int. Conf. Acoust., Speech Signal Process. (ICASSP)},
  pages={1--5},
  year={2023},
}

@article{zhang2025calibrating,
  author  = {J. Zhang and Q. Hou and X. Su and G. Yu},
  title   = {Calibrating {AI}-based beamforming via conformal prediction},
  journal = {IEEE Commun. Lett.},
  vol.    = {30},
  pages   = {56--60},
  year    = {2025}
}

@article{hou2025what,
  title={What if We Had Used a Different App? Reliable Counterfactual {KPI} Analysis in Wireless Systems},
  author={Hou, Qiushuo and Park, Sangwoo and Zecchin, Matteo and Cai, Yunlong and Yu, Guanding and Simeone, Osvaldo},
  journal={IEEE Trans. Cogn. Commun. Netw.},
  volume={11},
  number={5},
  pages={3529--3543},
  year={2025},
  publisher={IEEE}
}

@article{zhu2024federated,
  title={Federated Inference with Reliable Uncertainty Quantification over Wireless Channels via Conformal Prediction},
  author={Zhu, Meiyi and Zecchin, Matteo and Park, Sangwoo and Guo, Caili and Feng, Chunyan and Simeone, Osvaldo},
  journal={IEEE Trans. Signal Process.},
  volume={72},
  pages={1235--1250},
  year={2024},
  publisher={IEEE}
}

@article{zhu2025conformal,
  author  = {M. Zhu and M. Zecchin and S. Park and C. Guo and C. Feng and P. Popovski and O. Simeone},
  title   = {Conformal distributed remote inference in sensor networks under reliability and communication constraints},
  journal = {IEEE Trans. Signal Process.},
  vol.    = {73},
  pages   = {1485--1500},
  year    = {2025}
}



%


\end{document}